\newcommand{\compilefullversion}{true}
	\newcommand{\OnlyInFull}[1]{}
	\newcommand{\OnlyInShort}[1]{#1}
	\newcommand{\OnlyInFull}[1]{#1}%
	\newcommand{\OnlyInShort}[1]{}%
\newcommand{\compilehidecomments}{false}
	\newcommand{\wei}[1]{}
	\newcommand{\ruihan}[1]{}
	\newcommand{\wei}[1]{{\color{blue!50!black}  [\text{Wei:} #1]}}
	\newcommand{\ruihan}[1]{{\color{red!70!black} [\text{Ruihan:} #1]}}
\newcommand{\compileunmarkchanges}{false}
	\newcommand{\chgdel}[1]{{\color{violet}\sout{#1}}}%
	\newcommand{\chgdel}[1]{}%
\newcommand{\E}{\mathbb{E}}
\newcommand{\I}{\mathbb{I}}
\newcommand{\R}{\mathbb{R}}
\newcommand{\vx}{\mathbf{x}}
\newcommand{\vy}{\mathbf{y}}
\newcommand{\ve}{\mathbf{e}}
\newcommand{\cX}{\mathcal{X}}
\newcommand{\cR}{\mathcal{R}}
\newcommand{\eg}{\hat{g}}
\newcommand{\evx}{\hat{\vx}}
\def\INPUT{\REQUIRE}
\def\OUTPUT{\ENSURE}
\newcommand{\List}{{\it List}}
\newcommand{\prev}{{\it prev}}
\newcommand{\ratio}{{\it ratio}}
\newcommand{\LB}{{\it LB}}
\newcommand{\OPT}{{\it OPT}}
\newcommand{\EPT}{{\it EPT}}
\newcommand{\LGreedy}{\mbox{\sf L-Greedy}}
\newcommand{\LGreedyDelta}{\mbox{\sf L-GreedyDelta}}
\newcommand{\Sampling}{{\sf Sampling}}
\newcommand{\NodeSelection}{{\sf NodeSelection}}
\newcommand{\IMM}{{\sf IMM}}
\newcommand{\IMMPRR}{\mbox{\sf IMM-PRR}}
\newcommand{\IMMVSN}{\mbox{\sf IMM-VSN}}
\newcommand{\UD}{{\sf UD}}
\newcommand{\CD}{{\sf CD}}
\newcommand{\HD}{{\sf HD}}
\newcommand{\MCLG}{{\sf MCLG}}
\newcommand{\argmax}{\operatornamewithlimits{argmax}}
\title{Scalable Lattice Influence Maximization}
\author{Wei Chen \inst{1} \and Ruihan Wu \inst{2} \and Zheng Yu \inst{3}}
\institute{Microsoft Research, Beijing, China \\ \email{weic@microsoft.com} \and Cornell University, Ithaca, NY, USA \\ \email{rw565@cornell.edu} \and
	Princeton University, Princeton, NJ, USA \\ \email{zhengy@princeton.edu}
}
\begin{document}

\maketitle

\begin{abstract}
Influence maximization is the task of finding $k$ seed nodes in a social network
such that the expected number of activated nodes in the network (under certain influence
propagation model), referred to as the influence spread, is maximized.
Lattice influence maximization (LIM) generalizes influence maximization such that, instead of
selecting $k$ seed nodes, one selects a vector $\vx=(x_1, \ldots, x_d)$ from a discrete
space $\cX$ called a lattice, where $x_j$ corresponds to the $j$-th marketing strategy and
$\vx$ represents a marketing strategy mix.
Each strategy mix $\vx$ has probability $h_u(\vx)$ to activate a node $u$ as a seed. 
LIM is the task of finding a strategy mix under the constraint $\sum_{j} x_j \le k$
such that its influence spread is maximized.
We adapt the reverse influence sampling (RIS) approach and design scalable algorithms for LIM.
We first design the {\IMMPRR} algorithm based on partial reverse-reachable sets as a general solution
for LIM, and improve {\IMMPRR} for a large family of models
where each strategy independently activates seed nodes.
We then propose an alternative algorithm {\IMMVSN} based on virtual strategy nodes, for the family
of models with independent strategy activations.
We prove that both {\IMMPRR} and {\IMMVSN} guarantees $1-1/e-\varepsilon$ approximation for
small $\varepsilon > 0$.
Empirically, through extensive tests we demonstrate that {\IMMVSN} runs faster than {\IMMPRR} 
and much faster than other baseline algorithms while providing the same level of influence
spread.
We conclude that {\IMMVSN} is the best one for models with independent strategy activations, while
{\IMMPRR} works for general modes without this assumption.
Finally, we extend LIM to the partitioned budget case where strategies are partitioned into groups, each of which has a separate budget, and
	show that a minor variation of our algorithms would achieve
	$1/2 -\varepsilon$ approximation ratio with the same time complexity.

%
\end{abstract}

\keywords{influence maximization, lattice influence maximization, 
		scalable influence maximization, reverse influence sampling}

\section{Introduction}

The classical influence maximization task is to find a small set of seed nodes to maximize the
	expected number of activated nodes from these seeds, referred to as the {\em influence spread},
	based on certain diffusion process
	in a social network~\cite{kempe03journal}.
It models the viral marketing scenario in social networks and its variants also find applications
	in diffusion monitoring, rumor control, crime prevention, etc.
	(e.g.,~\cite{Leskovec07,BAA11,HeSCJ12,ShakarianSPB14}).
Therefore, numerous studies on influence maximization have been conducted since its inception.
One important direction is scalable influence maximization, which aims at design efficient
	approximation algorithms and heuristics for large social networks.
Many diverse approaches including graph theoretic heuristics, sketching methods, and random sampling
	have been tried for scalable influence maximization (e.g.,~\cite{ChenWY09,simpath11,WCW12,BorgsBrautbarChayesLucier,tang14,tang15,CDPW14}).
Other directions include competitive and complementary influence
	maximization~\cite{BAA11,HeSCJ12,lu2015competition},
	continuous-time influence maximization~\cite{GomezRodriguez16},
	topic-aware influence maximization~\cite{ChenFLFTT15},
	etc.

However, a generalization of influence maximization already considered by Kempe et al.
	in their seminal paper~\cite{kempe03journal} receives much
	less attention and is left largely unexplored.
Kempe et al. consider viral marketing scenarios with a general marketing strategy mix
	of $d$ different strategies, with each strategy $j$ taking value $x_j$ (e.g., money
	put into strategy $j$).
The combined strategy mix is a vector $\vx = (x_1, x_2, \ldots, x_d)$.
When applying the strategy mix $\vx$ to the social network, each node $u$ in the social network
	has a probability of $h_u(\vx)$ to be activated as a seed.
After the seeds are probabilistically activated by the marketing strategies, influence propagates
	from the seeds in the network as dictated by an influence diffusion model.
The optimization problem is to find the best strategy mix $\vx^*$ that maximizes the influence spread
	subject to the budget constraint $\sum_{j\in [d]} x_j \le k$, where notation $[d]$ means $\{1,2,\ldots,d\}$.
In this paper, we consider strategy mixes taken from a discrete space $\cX$ referred to as a
	{\em lattice}, and thus we call the above optimization problem {\em lattice influence maximization (LIM)}.

LIM represents more realistic scenarios, since in practice
	companies often apply a mix of marketing strategies, such as coupons, direct mails, marketing events, and
	target at different segments of users.
In~\cite{kempe03journal}, Kempe et al. outline the basic approach based on submodularity and greedy
	algorithm to solve the problem.
This direction, however, has not been further investigated in the research community.
The only relevant study we find is~\cite{YangMPH16}, which investigates influence maximization with
	fractional or continuous discounts on users in the network, a special case of the LIM problem.

In this paper, we provide a detailed study on the scalable solutions for
	the LIM problem.
It is well known that the naive greedy approach for influence maximization is not scalable due to
	excessive Monte Carlo simulations.
The problem could be even worse for LIM when we have a large strategy space with complicated interactions
	with the social network.
We tackle this problem by adapting the reverse influence sampling (RIS)
	approach~\cite{BorgsBrautbarChayesLucier,tang14,tang15}, which is
	successful for the classical influence maximization problem.
The adaption of RIS to LIM is not straightforward, because nodes in the network are not deterministically
	selected as seeds but probabilistically selected based on the complicate function 
	$h_u(\vx)$.
In fact, the study in \cite{YangMPH16} does not apply the RIS approach and only provides some
	heuristic algorithms without any theoretical guarantee.
	
In our study, we first prove several important properties that enable the RIS approach in the LIM setting, 
	one of which in particular shows that the RIS approach in LIM can be interpreted as
	partial coverage of reverse-reachable sets.
From this we design a general scalable algorithm {\IMMPRR} adapted from the {\IMM} algorithm
	for the classical influence maximization problem~\cite{tang15}.

Then we identify a large class of LIM problems in which each strategy could independently activate
	nodes as seeds in the social network, which we call {\em independent strategy activations}.
We show that this class of problem covers many practical application scenarios including
	user segment marketing, personalized marketing, and repeated event marketing.
For this class of problems we revise {\IMMPRR} to improve its efficiency.
Next, we further investigate an alternative
	design choice where we convert strategies into virtual nodes so that the propagation can be
	reduced to that of the classical triggering model~\cite{kempe03}.
Again, although the idea of introducing virtual nodes seem to be natural, 
	it is nontrivial to make it exactly match the original LIM model.
In fact, we need to apply a novel integration of the classical independent cascade (IC) and 
	linear threshold (LT) models into a single model to make it work, and this integration
	could be of independent interest by itself.
We refer to the resulting algorithm as {\IMMVSN}.
For both {\IMMPRR} and {\IMMVSN}, we prove that they provide $1-1/e -\varepsilon$
	approximation to the LIM problem for any $\varepsilon > 0$, and we analyze their
	time complexity, which indicates that {\IMMVSN} could perform better in running time.

We conduct extensive experiments of our algorithms and several baseline algorithms 
	(including algorithms proposed in~\cite{YangMPH16})
	on four real-world networks with two different
	type of marketing strategies.
Our experimental results demonstrate that {\IMMVSN} is faster than {\IMMPRR}, and is much faster than
	all other baseline algorithms, while {\IMMVSN}/{\IMMPRR} provides the same or slightly better
	influence spread than other algorithms.
Moreover, for both {\IMMVSN}/{\IMMPRR}, we can easily tune one parameter to balance between
	theoretical guarantee and faster performance.

Finally, we generalize LIM originally proposed by \cite{kempe03journal} to accommodate
	partitioned budgets (denoted as the LIM-PB problem),
	that is, the strategies are partitioned into groups and each group has a
	separate budget.
This matches the practical scenario when marketing activities are coordinated by multiple parties, each
	of which focusing on different marketing channels with different marketing budgets.
We connect the LIM-PB problem with submodular maximization under matroid constraints, 
	and thus it implies that a minor variation of our algorithms would achieve
	$1/2 -\varepsilon$ approximation ratio with the same time complexity.

In summary, we make the following contributions: (a) we propose two scalable algorithms to solve
	the LIM problem with theoretical guarantees, one is more general and the other is more
	efficient in the case of independent strategy activations; 
	(b)
	we demonstrate through experiments that {\IMMVSN} is the best for the case of independent
	strategy activations, and runs much faster than other algorithms;
	and 
	(c) 
	we extend the problem of LIM to the case of partitioned budgets, and show that our
	scalable algorithms can still provide constant approximation.



\vspace{-2mm}
\subsection{Related Work}
\label{sec:relatedwork}

Influence maximization for viral marketing is first studied as a data mining task in
	~\cite{domingos01,richardson02}.
Kempe et al. \cite{kempe03journal} are the first to formulate the problem as a discrete optimization problem.
They propose the independent cascade (IC), linear threshold (LT), triggering, and other more general models,
	study their submodularity, and propose the greedy algorithm that gives
	$1-1/e-\varepsilon$ approximate solution for $\varepsilon > 0$.
They also propose the LIM problem and the greedy approach to solve the problem.

Scalable influence maximization is an important direction and receives many attention.
Some early proposals rely on the properties of the IC and LT models as well as efficient graph algorithms
	to design scalable heuristics~\cite{CYZ10,WCW12,simpath11,JungHC12}.
Borgs et al. \cite{BorgsBrautbarChayesLucier} propose the novel approach of reverse influence sampling (RIS), which is able to provide both theoretical guarantee and scalable performance in practice.
The RIS approach is improved by a series of studies~\cite{tang14,tang15,NguyenTD16,TangTXY18},
	which is also a demonstration that even with the known RIS approach achieving scalable
	influence maximization still requires significant design effort.
Our algorithm is based on RIS and is adapted from the {\IMM} algorithm~\cite{tang15}.
The adaptations from other algorithms (e.g.~\cite{NguyenTD16,TangTXY18}) would be similar, 
	and we choose {\IMM} mainly for its relative simplicity for illustrative purpose. 	

Many other directions of influence maximization have been studied, such as
	competitive and complementary influence
	maximization,
	seed minimization, etc.
They are less relevant to our study, so we refer to a monograph~\cite{chen2013information} for
more comprehensive coverage on influence maximization.

%
	
In terms of the LIM problem, the most relevant study is the one in~\cite{YangMPH16}.
In their model, each user could receive a personalized discount, which is translated to the probability
	of the user being activated as a seed.
This corresponds to the personalized marketing scenario in our setting.
They propose a scalable heuristic algorithm based on coordinate decent to solve the problem.
Comparing to their study, our algorithm is better in
	(a) providing theoretical guarantees
	on approximation ratio and running time;
	(b) solving a larger class of problems covering
	segment marketing, event marketing etc.;
	and (c) outperforming their algorithm
	in both running time and influence spread.
	
Demaine et al. propose a fractional influence model, in which the fractional solution $x_v$
	for a node $v$ affects not only on  $v$'s activation as a seed but also on $v$'s activation
	by its neighbors during the diffusion process
	\cite{DemaineHMMRSZ14}.
Thus, their model is incomparable with our LIM model, although both allow fractional solutions.

DR-submodular function maximization over lattices or continuous domain receives many attentions in
	recent years (e.g.,~\cite{FeldmanNS11,SomaKIK14,HassaniSK17}).
The main difference is that our algorithmic design focuses
	on the specific DR-submodular function related
	to the influence maximization task, while those studies focus on general
	DR-submodular functions.
Another difference is that they often rely on gradient methods, which assume that the function is continuous
	and differentiable, but we do not rely on such assumptions.
%
%
%
%

\section{Model and Problem Definition}

Influence propagation in social networks is modeled by the triggering model~\cite{kempe03journal}.
A social network is modeled as a directed graph $G=(V,E)$, where $V$ is the set of nodes representing
	individuals, and $E$ is the set of directed edges representing influence relationships.
We denote $n=|V|$ and $m=|E|$.
In the triggering model, every node $v$ has a distribution $D_v$ over all subsets of its in-neighbors.
Each node is either inactive or active, and once active it stays active.
Before the propagation starts, each node $v$ samples a triggering set $T_v\sim D_v$.
The propagation proceeds in discrete time steps $t=0, 1, 2, \ldots$.
At time $t=0$, nodes in a given {\em seed set} $S\subseteq V$ are activated.
For any time $t\ge 1$, an inactive node $v$ becomes active if any only if at least one of its
	in-neighbors in $T_v$ becomes active by time $t-1$.
The propagation ends when there is no newly activated nodes at a step.
Two classical models, independent cascade (IC) and linear threshold (LT), are both special cases
	of the triggering model:
	In the IC model, each edge $(u,v)$ has an influence probability $p(u,v)$, and the triggering set $T_v$
	is sampled by independently sample every incoming edge $(u,v)$ of $v$ with success probability
	$p(u,v)$ and put $u$ into $T_v$ if the edge sample is successful;
	in the LT model, each edge $(u,v)$ has an influence weight $w(u,v)\in [0,1]$ such that $\sum_{u} w(u,v)\le 1$, and at most one in-neighbor $u$ is sampled into $T_v$ with probability
	proportional to $w(u,v)$.
When considering time complexity, we assume that each sample $T_v$ can be drawn with time proportional
	to the in-degree of $v$, and this holds for both IC and LT models.

A key quantity is the {\em influence spread} of a seed set $S$, denoted as $\sigma(S)$, which is defined as
	the expected number of final active nodes for the propagation starting from $S$.
The classical influence maximization task is to select at most $k$ seed nodes to maximize the
	influence spread, i.e., to find $S^* = \arg\max_{S\subseteq V, |S|\le k} \sigma(S)$.
The problem is NP hard, and \cite{kempe03journal} proposes the greedy approximation algorithm, which is based
	on the submodularity of $\sigma(S)$ and guarantees $1-1/e - \varepsilon$ approximation
	for any small $\varepsilon > 0$.
	
In this paper we study the extension of
	influence maximization with general marketing strategies~\cite{kempe03journal}.
A mix of marketing strategies is modeled as a $d$-dimensional vector $\vx=(x_1, \ldots, x_d) \in \R_+^{d}$,
	where $\R_+$ is the set of nonnegative real numbers.
Each dimension corresponds to a particular marketing strategy, e.g., direct mail to one segment of the user base.
Under the marketing strategy mix $\vx$, each node $u\in V$ is independently activated as a seed with the probability given by the {\em strategy activation function} $h_u(\vx)$.
Then the set of activated seed nodes propagate the influence in the network following 
	the triggering model.
We define the {\em influence spread} of a marketing strategy mix $\vx$ as the expected number of nodes activated, and denote it as $g(\vx)$:
\begin{equation} \label{eq:vectorspread}
g(\vx) =\E_S[\sigma(S)] = \sum_{S\subseteq V}\sigma(S)\cdot\prod_{u\in S}h_u(\vx)\cdot\prod_{v\notin S}(1-h_v(\vx)).
\end{equation}
The above formula can be interpreted as follows: 
	for each subset of nodes $S$, under the marketing strategy mix $\vx$, the probability that exactly nodes in $S$ are activated as seeds
	and nodes not in $S$ are not activated as seeds is given as $\prod_{u\in S}h_u(\vx)\cdot\prod_{v\notin S}(1-h_v(\vx))$, which is because the node activations
	are independent.
Then given that exactly nodes in $S$ are activated as seeds, the influences spread it generates is $\sigma(S)$.
Therefore, enumerating through all possible subset set $S$, we obtain the above formula.

In this paper, we consider discretized marketing strategies with granularity parameter $\delta$,
	i.e., each strategy $x_i$ takes discretized values $0, \delta, 2\delta, \ldots$.
These set of vectors is referred to as a lattice, and is denoted as $\cX$.
We consider the marketing strategy mix $\vx$ with a total budget constraint $k$: $|\vx| \le k$,
	where $|\vx| = \sum_{i\in [d]}  x_i$.
The above constraint can be thought as the total monetary budget constraint, where $x_j$ is the monetary expense on strategy $j$, but other interpretations are also possible.
Since we are doing influence maximization on lattice $\cX$, we call it {\em lattice influence maximization}, as formally defined below.

\begin{definition}[Lattice Influence Maximization]
	\label{def:LIM}
Given a social network $G=(V,E)$ with the triggering model parameters $\{D_v\}_{v\in V}$, given the strategy activation functions $\{h_v\}_{v\in V}$
	and a total budget $k$,
	the task of Lattice influence maximization, denoted as LIM, is to find an optimal strategy mix $\vx^*$ that achieves
	the largest influence spread within the budget constraint, that is
\[
	\vx^* = \argmax_{\vx\in \cX, |\vx|\le k } g(\vx).
\]

\end{definition}

Note that if $\cX = \{0,1\}^{n}$ and $h_v(\vx) = x_v$, that is, $v$ is activated as a seed if and only if it is selected by strategy $\vx$, the LIM problem becomes the classical influence maximization problem.
Therefore, LIM is more general, and inherits the NP-hardness of the classical problem.
For convenience, we sometimes also use LIM to refer to the
	lattice-based propagation model described above.

To solve the LIM problem, \cite{kempe03journal} proposes the greedy algorithm based on the diminishing return property of $g(\vx)$, commonly
	referred to as the DR-submodular property~\cite{SomaY15}.
For two vectors $\vx, \vy \in \R^d$, we denote $\vx \le \vy$ if $x_i \le y_i$ for all $i\in [d]$.
Let $\ve_i \in \R^d$ be the unit vector with the $i$-th dimension being $1$ and all other
	dimensions being $0$.
For a vector function $f: \cX \rightarrow \R$, we say that $f$ is {\em DR-submodular} if
	for all $\vx,\vy \in \cX$ with $\vx \le \vy$, for all $i\in [d]$,
	$f(\vx + \delta\ve_i) - f(\vx) \ge f(\vy + \delta \ve_i) - f(\vy)$; and
	we say that $f$ is {\em monotone (nondecreasing)} if for all $\vx \le \vy$, $f(\vx) \le f(\vy)$.
Note that a set function $f$ is monotone if $f(S)\le f(T)$ for all $S\subseteq T$, and submodular
	if $f(S\cup \{u\}) - f(S) \ge f(T\cup \{u\}) - f(T)$ for all $S\subseteq T$ and $u\not\in T$.
It is clear that if we represent sets as binary vectors and take step size $\delta =1$, then it
	coincides with monotonicity and DR-submodularity of vector functions.

\begin{algorithm}[t]
	\caption{Algorithm $\LGreedy(f, k, \delta)$}
	\label{alg:hillclimb}
	\begin{algorithmic} [1]
		\INPUT{monotone DR-submodular $f$, budget $k$, granularity $\delta$}
		\OUTPUT{vector $\vx$}
		\STATE $\vx = \mathbf{0}$
		\FOR{$t = 1, 2, \cdots, k\cdot\delta^{-1}$}
			\STATE $j^* = \argmax_{j\in [d]} f(\vx + \delta \ve_j)$ \label{line:argmaxgreedy}
			\STATE $\vx = \vx + \delta \ve_{j^*}$
		\ENDFOR
		\STATE \textbf{return} $\vx$
	\end{algorithmic}
\end{algorithm}

When the vector function $f$ on lattice $\cX$ is nonnegative,
	monotone and DR-submodular, the lattice-greedy (denoted as {\LGreedy})
	algorithm as given in Algorithm~\ref{alg:hillclimb} achieves $1-1/e$
	approximation~\cite{NWF78}.
The {\LGreedy} algorithm searches the coordinate that gives the largest marginal return and
	moves one step of size $\delta$ on that coordinate, until it exhausts the budget.
%

To apply the $\LGreedy$ algorithm to LIM, \cite{kempe03journal} shows that
	when $h_v$'s are monotone and DR-submodular with $\sigma(S)$ being monotone and submodular,
	the influence spread $g(\vx)$ given in Eq.~\eqref{eq:vectorspread} is also monotone and DR-submodular.
Therefore, the $\LGreedy$ algorithm can be applied to $g(\vx)$.
As it is \#P-hard to compute the influence spread $\sigma(S)$ in the IC and LT models~\cite{WCW12,CYZ10},
	we could use Monte Carlo simulations to estimate $g(\vx)$ to achieve
	$1-1/e-\varepsilon$ approximation for any small $\varepsilon > 0$.
	
We remark that in the LIM problem, 
	for each strategy $j$, we can add an upper bound constraint $x_i \le b_i$
	without changing the problem, because we can extend the domain of $x_i$ beyond $b_i$ by
	restricting $h_v(\vx)$ with some $x_i > b_i$ to be the value at the boundary $x_i = b_i$.
It is easy to verify that this extension will not affect monotonicity and DR-submodularity of function
	$h_v$, nor will it affect the lattice-greedy algorithm.

\section{Scalable Algorithms for LIM}

It is well known that the Monte Carlo greedy algorithm is not scalable.
In this paper, we propose scalable solutions to the LIM problem based on the
	seminal reverse influence sampling (RIS) approach~\cite{BorgsBrautbarChayesLucier,tang14,tang15}.
In particular, we adapt the {\IMM} (Influence Maximization with Martingales) algorithm
	of~\cite{tang15} in two different ways, one relies on partial reverse reachable sets
	and is denoted as {\IMMPRR}, and the other uses virtual strategy nodes and is denoted
	as {\IMMVSN}.

\subsection{Reverse Reachable Sets and Its Properties}
The RIS approach is based on the key concept of the {\em reverse reachable sets (RR sets)}, as defined below.
\begin{definition}[Reverse Reachable Set]
Under the triggering model, a {\em reverse reachable (RR) set}
	rooted at a node $v$, denoted $R_v$, is the random set of nodes $v$ reaches in one reverse
	propagation: sample all triggering sets $\{T_u\}_{u\in V}$, such that edges 
	$\{(w,u) \mid u\in V, w\in T_u\}$ together with nodes $V$ form a live-edge graph, and
	$R_v$ is the set of nodes that can reach $v$ (or $v$ can reach reversely) in this live-edge graph.
An RR set $R$ without specifying a root is one with root $v$ selected uniformly at random from $V$.
\end{definition}
Intuitively, RR sets rooted at $v$ store nodes that are likely to influence $v$.
Technically, it has the following important connection with the influence spread of a seed set $S$:
	$\sigma(S) = n \cdot \E_R[\I\{S\cap R \neq \emptyset\}]$, where $\I$ is the indicator function
	\cite{BorgsBrautbarChayesLucier,tang15}.

For our LIM problem, our first key observation is that the above property can be extended
	in the following way as a partial coverage on RR sets.
\begin{lemma} \label{lem:RRSetGMS}
For any strategy mix $\vx \in \cX$, we have
\begin{equation} \label{eq:RRSetGMS}
g(\vx) = n \cdot \E_{R}\left[1 - \prod_{v\in R}(1 - h_v(\vx))\right].
\end{equation}
\end{lemma}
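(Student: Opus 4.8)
The plan is to start from the definition of $g(\vx)$ in Eq.~\eqref{eq:vectorspread} and substitute the RR-set characterization of $\sigma(S)$, namely $\sigma(S) = n \cdot \E_R[\I\{S \cap R \neq \emptyset\}]$, which holds for the triggering model and is quoted in the excerpt. This turns $g(\vx)$ into a double expectation over the random seed set $S$ (drawn according to the independent activations with probabilities $h_u(\vx)$) and the random RR set $R$. The next step is to swap the order of the two expectations --- they are independent sources of randomness, so Fubini applies with no subtlety --- and then compute, for a fixed RR set $R$, the inner expectation $\E_S[\I\{S \cap R \neq \emptyset\}] = \Pr_S[S \cap R \neq \emptyset]$.

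For that inner probability I would pass to the complement: $\Pr_S[S \cap R \neq \emptyset] = 1 - \Pr_S[S \cap R = \emptyset]$. The event $S \cap R = \emptyset$ says precisely that no node of $R$ is activated as a seed. Since under strategy mix $\vx$ each node $v$ is activated independently with probability $h_v(\vx)$, the event that every $v \in R$ fails to be activated has probability $\prod_{v \in R}(1 - h_v(\vx))$. Hence the inner expectation equals $1 - \prod_{v \in R}(1 - h_v(\vx))$, and pulling the factor $n$ back out gives exactly Eq.~\eqref{eq:RRSetGMS}. Concretely, the chain of equalities is
\begin{align*}
g(\vx) &= \sum_{S \subseteq V} \sigma(S) \prod_{u \in S} h_u(\vx) \prod_{v \notin S}(1 - h_v(\vx)) = \E_S[\sigma(S)] \\
 &= \E_S\!\left[ n \cdot \E_R[\I\{S \cap R \neq \emptyset\}] \right] = n \cdot \E_R\!\left[ \E_S[\I\{S \cap R \neq \emptyset\}] \right] \\
 &= n \cdot \E_R\!\left[ 1 - \Pr_S[S \cap R = \emptyset] \right] = n \cdot \E_R\!\left[ 1 - \prod_{v \in R}(1 - h_v(\vx)) \right].
\end{align*}

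There is no real obstacle here; the argument is a short composition of known facts. The only point that deserves a sentence of care is the independence used twice: first, the independence between the seed-activation randomness and the live-edge (triggering-set) randomness, which justifies the interchange of $\E_S$ and $\E_R$; and second, the mutual independence of the individual node activations under a fixed $\vx$, which is what lets the probability of the joint non-activation event factor as $\prod_{v \in R}(1 - h_v(\vx))$. Both are built into the LIM model as defined in Section~2 (``each node $u \in V$ is independently activated as a seed''), so the lemma follows directly. If one wanted to avoid invoking the quoted identity $\sigma(S) = n\cdot\E_R[\cdot]$ as a black box, one could instead expand $\sigma(S)$ itself over the live-edge graph and carry all three layers of randomness (live edges, seeds, uniform root) through the same complement-and-factor computation, but reusing the stated identity is cleaner.
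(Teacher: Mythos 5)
Your proposal is correct and follows essentially the same route as the paper's proof: substitute $\sigma(S)=n\cdot\E_R[\I\{S\cap R\neq\emptyset\}]$ into $g(\vx)=\E_S[\sigma(S)]$, exchange the two independent sources of randomness, and evaluate $\Pr_S[S\cap R\neq\emptyset]$ for fixed $R$ via the complement and the independence of node activations, yielding $1-\prod_{v\in R}(1-h_v(\vx))$. Your write-up simply spells out the Fubini/independence steps that the paper leaves implicit.
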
 
\begin{proof}
By Eq.~\eqref{eq:vectorspread}, we have $g(\vx) = \E_{S}[\sigma(S)]= n\cdot \E_{S,R}[\I\{S\cap R \neq \emptyset\}] = n\cdot \E_R[\Pr_S\{S\cap R \ne \emptyset \} ]$.
Then $\Pr_S\{S\cap R \ne \emptyset \}$ is the probability that at least one node in $R$ (now fixed)
	is activated as a seed under strategy mix $\vx$, so it is
	$ 1 - \prod_{v\in R}(1 - h_v(\vx))$.
\hfill $\square$
\end{proof}

Lemma~\ref{lem:RRSetGMS} indicates that an RR set $R$ 
	is {\em partially} covered by a strategy mix $\vx$ with probability (or weight)
	$ 1 - \prod_{v\in R}(1 - h_v(\vx))$, instead of the classical case where
	an RR set is either fully covered by a seed set $S$ or not.
This lead to the partial RR set extension of IMM, called {\IMMPRR}.

\subsection{Algorithm {\IMMPRR}}
\label{sec:IMMPRR}

\noindent
{\bf General Structure of {\IMMPRR}.\ \ }
By Eq.\eqref{eq:RRSetGMS}, we can generate $\theta$ independent RR sets as a collection $\cR$ to
	obtain
\begin{equation} \label{eq:hatgdef}
	\eg_{\cR}(\vx) = \frac{n}{\theta}\sum_{R\in \cR} \left( 1 - \prod_{v\in R}(1 - h_v(\vx)) \right)
\end{equation}
	as an unbiased estimate of $g(\vx)$.
Moreover, we have the following property for $\eg_{\cR}(\vx)$.
\begin{restatable}{lemma}{lemhtogsubmodular} \label{lem:htogsubmodular}
If $h_v$ is monotone and DR-submodular for all $v\in V$, then functions $g$ and 
	$\eg_{\cR}$ are also monotone and DR-submodular.	
\end{restatable}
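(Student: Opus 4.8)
Let me plan how to prove that $g$ and $\eg_{\cR}$ are monotone and DR-submodular given that each $h_v$ is monotone and DR-submodular.

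The plan is to prove the statement for $\eg_{\cR}$ first, and then deduce it for $g$ by a limiting/averaging argument (or directly, since $g(\vx) = n\,\E_R[1 - \prod_{v\in R}(1-h_v(\vx))]$ has exactly the same structure as each summand of $\eg_{\cR}$, just with an expectation over $R$ in place of a finite average). Since monotonicity and DR-submodularity are both preserved under nonnegative linear combinations and under pointwise limits of expectations, it suffices to show that for a single fixed RR set $R$, the function
\[
\vx \mapsto \phi_R(\vx) := 1 - \prod_{v\in R}(1 - h_v(\vx))
\]
is monotone and DR-submodular on $\cX$. Then $\eg_{\cR} = \frac{n}{\theta}\sum_{R\in\cR}\phi_R$ and $g = n\,\E_R[\phi_R]$ inherit both properties.

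For \textbf{monotonicity} of $\phi_R$: if $\vx \le \vy$, then $h_v(\vx) \le h_v(\vy)$ for each $v$ by monotonicity of $h_v$, hence $1 - h_v(\vx) \ge 1 - h_v(\vy) \ge 0$ (note $h_v$ takes values in $[0,1]$), so $\prod_{v\in R}(1-h_v(\vx)) \ge \prod_{v\in R}(1-h_v(\vy))$, giving $\phi_R(\vx) \le \phi_R(\vy)$. For \textbf{DR-submodularity}, I need to show that for $\vx \le \vy$ and any coordinate $i$, $\phi_R(\vx + \delta\ve_i) - \phi_R(\vx) \ge \phi_R(\vy + \delta\ve_i) - \phi_R(\vy)$. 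Writing $P(\vx) = \prod_{v\in R}(1 - h_v(\vx))$, this is equivalent to $P(\vx) - P(\vx+\delta\ve_i) \ge P(\vy) - P(\vy+\delta\ve_i)$. The cleanest route is to fix $i$ and order the nodes of $R$ as $v_1,\dots,v_r$, then telescope: $P(\vx) - P(\vx+\delta\ve_i) = \sum_{j=1}^{r}\big(\prod_{\ell<j}(1-h_{v_\ell}(\vx+\delta\ve_i))\big)\big(h_{v_j}(\vx+\delta\ve_i)-h_{v_j}(\vx)\big)\big(\prod_{\ell>j}(1-h_{v_\ell}(\vx))\big)$, and similarly for $\vy$. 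For each term $j$, compare the three factors between the $\vx$-version and the $\vy$-version: the marginal $h_{v_j}(\cdot+\delta\ve_i) - h_{v_j}(\cdot)$ is nonnegative and, by DR-submodularity of $h_{v_j}$, is at least as large at $\vx$ as at $\vy$; each factor $(1 - h_{v_\ell}(\cdot))$ is nonnegative and, by monotonicity of $h_{v_\ell}$, at least as large at $\vx$ as at $\vy$ (this uses $\vx \le \vy$ and $\vx+\delta\ve_i \le \vy+\delta\ve_i$). Since all factors are nonnegative and the $\vx$-term dominates the $\vy$-term factor by factor, the $\vx$-sum dominates the $\vy$-sum, which is exactly what we need.

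The \textbf{main obstacle} is purely bookkeeping: making the telescoping product expansion rigorous and checking that every factor in every term is nonnegative and points in the right direction (one needs both $h_v \in [0,1]$ and the two monotone/DR-submodular hypotheses applied at the right points). An alternative, slightly slicker argument avoids telescoping entirely: note that each $\vx \mapsto 1 - h_v(\vx)$ is nonnegative, nonincreasing, and ``DR-supermodular'' (the reverse inequality holds), that a product of nonnegative nonincreasing DR-supermodular functions is again nonnegative nonincreasing DR-supermodular (a two-function induction using $P_1 P_2$ difference identities), and hence $1 - P(\vx)$ is nonnegative nondecreasing DR-submodular. Either way the engine is the same elementary factorization; I would present the telescoping version since it is self-contained. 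Finally, for $g$ I invoke Lemma~\ref{lem:RRSetGMS} to write $g = n\,\E_R[\phi_R]$ and note that an expectation (a limit of nonnegative combinations) of monotone DR-submodular functions is monotone DR-submodular, completing the proof. \hfill $\square$
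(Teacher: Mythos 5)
Your proposal is correct and at its core the same argument as the paper's: the paper reduces to showing that $\prod_{v\in R}(1-h_v(\vx))$ is nonnegative, nonincreasing and DR-supermodular via the two-function product Lemma~\ref{lem:productSubmodular} applied to the factors $1-h_v$ (exactly the ``slicker alternative'' you mention), and your telescoping expansion of $P(\vx)-P(\vx+\delta\ve_i)$ is just that induction unrolled, with the same factor-by-factor comparison using monotonicity, DR-submodularity, and $h_v\in[0,1]$. Your closing step---passing from a single RR set to $\eg_{\cR}$ and $g$ via nonnegative averaging/expectation and Lemma~\ref{lem:RRSetGMS}---likewise matches the paper.
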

\begin{proof}[Sketch]
	We apply the technical Lemma~\ref{lem:productSubmodular} below on
	$\prod_{v\in R}(1 - h_v(\vx))$, and notice that
	$1 - h_v(\vx)$ is nonnegative, monotone nonincreasing, and DR-supermodular.
	Therefore, $1 - \prod_{v\in R}(1 - h_v(\vx))$ is nonnegative, monotone increasing, and DR-submodular.
\hfill $\square$
\end{proof}

\begin{lemma} \label{lem:productSubmodular}
	If $f_1$ and $f_2$ are nonnegative, monotone nonincreasing and DR-supermodular, then $f(\vx) = f_1(\vx) f_2(\vx)$ is also monotone nonincreasing and DR-supermodular.
\end{lemma}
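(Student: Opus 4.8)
The plan is to verify monotonicity and DR-supermodularity of the product directly from the definitions, exploiting the sign conditions on $f_1$ and $f_2$. First I would establish monotonicity: fix $\vx \le \vy$ in $\cX$. Since $f_1$ is monotone nonincreasing and nonnegative, $0 \le f_1(\vy) \le f_1(\vx)$, and similarly $0 \le f_2(\vy) \le f_2(\vx)$. Multiplying these two chains of inequalities (valid because all four quantities are nonnegative) gives $f_1(\vy) f_2(\vy) \le f_1(\vx) f_2(\vx)$, i.e.\ $f(\vy) \le f(\vx)$, so $f$ is monotone nonincreasing.

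The main work is DR-supermodularity: I must show that for all $\vx \le \vy$ in $\cX$ and all $i \in [d]$,
\[
f(\vx+\delta\ve_i) - f(\vx) \le f(\vy+\delta\ve_i) - f(\vy),
\]
that is, the marginal decrement $f(\vx) - f(\vx+\delta\ve_i)$ is nonincreasing in $\vx$ (this is the ``reverse'' of the DR-submodular inequality, matching ``DR-supermodular''). The natural route is the algebraic identity for the marginal of a product,
\[
f(\vx) - f(\vx+\delta\ve_i) = \bigl(f_1(\vx)-f_1(\vx+\delta\ve_i)\bigr)\,f_2(\vx) + f_1(\vx+\delta\ve_i)\,\bigl(f_2(\vx)-f_2(\vx+\delta\ve_i)\bigr).
\]
I would bound each of the two terms separately. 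For the first term: $f_1(\vx)-f_1(\vx+\delta\ve_i) \ge f_1(\vy)-f_1(\vy+\delta\ve_i) \ge 0$ by DR-supermodularity of $f_1$, and $f_2(\vx) \ge f_2(\vy) \ge 0$ by monotonicity and nonnegativity of $f_2$; multiplying the two (all nonnegative) gives that the first term at $\vx$ is at least the first term at $\vy$. The second term is handled symmetrically, using DR-supermodularity of $f_2$ together with $f_1(\vx+\delta\ve_i) \ge f_1(\vy+\delta\ve_i) \ge 0$ (monotonicity and nonnegativity of $f_1$, applied at the shifted points). Adding the two bounds yields $f(\vx)-f(\vx+\delta\ve_i) \ge f(\vy)-f(\vy+\delta\ve_i)$, which is exactly DR-supermodularity of $f$.

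The one point requiring care — and the place a naive argument could break — is that multiplying inequalities preserves direction only when all terms involved are nonnegative; this is precisely why the hypotheses include nonnegativity and why we need the monotonicity direction to line up so that the ``larger'' factors sit at $\vx$ and the ``smaller'' ones at $\vy$ consistently across both summands. I would also note that the nonnegativity of the marginal decrements $f_j(\vx) - f_j(\vx+\delta\ve_i)$ follows from monotonicity of $f_j$, so every factor appearing in the product-splitting identity is nonnegative, legitimizing the term-by-term comparison. No continuity or differentiability is used, so the argument is purely combinatorial over the lattice $\cX$, which is what we want for applying it to $1 - h_v(\vx)$ inside Lemma~\ref{lem:htogsubmodular}.
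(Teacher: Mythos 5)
Your proof is correct and takes essentially the same route as the paper: both rest on the product-splitting identity for the discrete marginal of $f_1 f_2$ and then invoke DR-supermodularity of each factor together with monotone nonincreasingness and nonnegativity. The only difference is presentational — you work with nonnegative decrements and compare the two summands term by term, whereas the paper bounds the increment-form expression in a chained two-step substitution — so no new idea is involved.
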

\begin{proof}
	The monotonicity is straightforward. For DR-supermodularity, for any $\vx \le \vy$, we have
	\begin{align*}
	&(f(\vx+ \delta\ve_i) - f(\vx)) - (f(\vy + \delta \ve_i) - f(\vy) )\\
	& = f_1(\vx+ \delta\ve_i)f_2(\vx+ \delta\ve_i) - f_1(\vx)f_2(\vx) \\
	& \quad - (f_1(\vy+ \delta\ve_i)f_2(\vy+ \delta\ve_i) - f_1(\vy)f_2(\vy) ) \\
	& = f_1(\vx+ \delta\ve_i)(f_2(\vx+ \delta\ve_i)-f_2(\vx))  \\
	& \quad	+ f_2(\vx)(f_1(\vx + \delta \ve_i) - f_1(\vx))  \\
	& \quad  - f_1(\vy+ \delta\ve_i)(f_2(\vy+ \delta\ve_i)-f_2(\vy)) \\
	& \quad - f_2(\vy)(f_1(\vy + \delta \ve_i) - f_1(\vy))  \\
	& \le f_1(\vx+ \delta\ve_i)(f_2(\vy+ \delta\ve_i)-f_2(\vy)) \\
	& \quad + f_2(\vx)(f_1(\vy + \delta \ve_i) - f_1(\vy))  \\
	& \quad - f_1(\vy+ \delta\ve_i)(f_2(\vy+ \delta\ve_i)-f_2(\vy)) \\
	& \quad - f_2(\vy)(f_1(\vy + \delta \ve_i) - f_1(\vy))  \\
	& = (f_1(\vx+ \delta\ve_i) - f_1(\vy+ \delta\ve_i)  )(f_2(\vy+ \delta\ve_i)-f_2(\vy)) \\
	& \quad + (f_2(\vx) - f_2(\vy) ) (f_1(\vy + \delta \ve_i) - f_1(\vy)) \le 0,
	\end{align*}
	where the first inequality is due to the DR-supermodular and nonnegative conditions, and the second inequality is
	due to the monotone nonincreasing property.
\end{proof}

With Lemma~\ref{lem:htogsubmodular}, we can apply the $\LGreedy$ algorithm on $\eg_{\cR}$.
Let $\evx^o = \LGreedy(\eg_{\cR}, k, \delta)$.
When $\theta = |\cR|$ is large enough, $\eg_{\cR}$ is very close to $g$, and we could show that
	$\evx^o$ is a $1-1/e -\varepsilon$ approximation for the LIM problem.

\begin{algorithm}[t]
	\caption{General structure of \IMMPRR}
	\begin{algorithmic} [1]
		\INPUT{$G$: the social graph; $\{D_v\}_{v\in V}$: triggering model parameters;
			$\{h_v\}_{v\in V}$: strategy activation functions (or $\{q_{v,j}\}_{v\in V, j\in S_v}$ for
			$\LGreedyDelta$;)
		$k$: budget;
		 $\delta$: granularity; $\varepsilon$: accuracy; $\ell$: confidence}
		\OUTPUT{$\vx \in \cX$}
		\STATE $\cR=\Sampling(G, \{D_v\}_{v\in V}, k, \delta, \varepsilon, \ell)$
		\STATE $\vx =\LGreedy(\eg_{\cR}, k, \delta)$ \\
			// or $\LGreedyDelta(\cR, \{q_{v,j}\}_{v\in V, j\in S_v}, k, \delta)$
		\STATE \textbf{return} $\vx$
	\end{algorithmic}
	\label{alg:IMMGMS}
\end{algorithm}

This leads to the general structure of the {\IMMPRR} algorithm as given in Algorithm~\ref{alg:IMMGMS},
	similar to the {\IMM} algorithm. 
The algorithm takes the input as listed in Algorithm~\ref{alg:IMMGMS}
	and outputs
	$\vx$ such that $\vx$ is a $1-1/e-\varepsilon$ approximate solution to the LIM
	problem with probability at least $1-n^\ell$.
The algorithm contains two phases.
In the first phase, the $\Sampling$ procedure determines the number of RR sets needed and
	generates these RR sets; in the second phase, a lattice-greedy algorithm on these RR sets are
	used to find the resulting strategy vector $\vx$.
We first discuss the second phase, which requires major changes from the original {\IMM} algorithm,
	and then introduce the first phase.

\vspace{2mm}
\noindent
{\bf Efficient {\LGreedy} on RR Sets under Independent Strategy Activation.\ \ }
\label{sec:application}
If the strategy activation function $h_v(\cdot)$'s are given as black boxes,
	we have to compute $h_v(\vx)$ from scratch.
Suppose that the running time cost for computing $h_v(\vx)$ is $O(T_{h_v})$.
Then it is straightforward to verify that
	the $\LGreedy(\eg, k, \delta)$ algorithm with the computation of $\eg(\vx)$ as given
in Eq.~\eqref{eq:hatgdef} has time complexity  $O(k \cdot \delta^{-1}\cdot d\cdot \sum_{R\in \cR}\sum_{v\in R}T_{h_v})$.

When we have further structural knowledge about $h_v$'s, we can greatly improve the efficiency
	of the {\LGreedy} algorithm.
In particular, we consider a large class of functions where each strategy $j$
	independently try to activate $v$ as a seed.
We refer to this case as {\em independent strategy activation}.
Suppose that the set of strategies that may activate $v$ is $S_v\subseteq [d]$, and
	the probability that strategy $j$ with amount $x_j$ activates $v$ as a seed is $q_{v,j}(x_j)$,
	with $q_{v,j}(0)=0$.
Then we have
\begin{equation}
h_v(\vx) = 1 - \prod_{j\in S_v}(1 - q_{v,j}(x_j)).
\label{eq:hv}
\end{equation}
We assume that $q_{v,j}(x)$ is non-decreasing and concave for every $j\in S_v$.
The following lemma shows that in this case $h_v(\vx)$ is monotone and DR-submodular.
\begin{restatable}{lemma}{lemDRSubmodularqtoh} \label{lem:DRSubmodularq2h}
	If function $q_{v,j}(x)$ is non-decreasing and concave for every $j\in S_v$, then
	$h_v(\vx)$ is monotone and DR-submodular.	
\end{restatable}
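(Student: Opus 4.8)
The plan is to reduce the statement to the single–coordinate structure of $h_v$ and then reuse Lemma~\ref{lem:productSubmodular}, exactly in the spirit of the sketch of Lemma~\ref{lem:htogsubmodular}. First note two easy reductions. If $i\notin S_v$, then $h_v$ does not depend on $x_i$, so $h_v(\vx+\delta\ve_i)-h_v(\vx)=h_v(\vy+\delta\ve_i)-h_v(\vy)=0$ and DR-submodularity is trivial; hence only coordinates $i\in S_v$ matter. Monotonicity is immediate: each $q_{v,j}$ is non-decreasing with values in $[0,1]$, so each factor $1-q_{v,j}(x_j)$ is non-negative and non-increasing in $x_j$, making $\prod_{j\in S_v}(1-q_{v,j}(x_j))$ non-increasing and thus $h_v=1-\prod_{j\in S_v}(1-q_{v,j}(x_j))$ non-decreasing.

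For DR-submodularity I would view each $1-q_{v,j}(x_j)$ as a function on the whole lattice $\cX$ that depends only on the single coordinate $x_j$. Such a function is non-negative (because $q_{v,j}\le 1$), monotone non-increasing (because $q_{v,j}$ is non-decreasing), and — the key point — DR-supermodular: for a function of one coordinate, DR-supermodularity says exactly that the $\delta$-increments $q_{v,j}(x+\delta)-q_{v,j}(x)$ are non-increasing in $x$, which is precisely the concavity of $q_{v,j}$ restricted to the lattice points $\{0,\delta,2\delta,\dots\}$ (a consequence of the assumed concavity on $\R_+$); for any coordinate $i\neq j$ both sides of the DR-supermodular inequality are $0$. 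Now start from the constant function $1$, which is trivially non-negative, monotone non-increasing and DR-supermodular, and multiply in the factors $1-q_{v,j}(x_j)$ one at a time; repeated application of Lemma~\ref{lem:productSubmodular} (non-negativity of the running product being preserved automatically) shows that $\prod_{j\in S_v}(1-q_{v,j}(x_j))$ is non-negative, monotone non-increasing and DR-supermodular. Negating and adding $1$ then converts ``monotone non-increasing'' into ``monotone non-decreasing'' and ``DR-supermodular'' into ``DR-submodular'', which is the claim for $h_v$.

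As a self-contained alternative (and sanity check) one can argue directly: for $i\in S_v$, $h_v(\vx+\delta\ve_i)-h_v(\vx)=\bigl(\prod_{j\in S_v\setminus\{i\}}(1-q_{v,j}(x_j))\bigr)\bigl(q_{v,i}(x_i+\delta)-q_{v,i}(x_i)\bigr)$, and similarly at $\vy$. When $\vx\le\vy$, the leading product is non-negative and non-increasing, hence at least the corresponding product at $\vy$; and by concavity together with monotonicity of $q_{v,i}$, the increment $q_{v,i}(x_i+\delta)-q_{v,i}(x_i)$ is non-negative and at least $q_{v,i}(y_i+\delta)-q_{v,i}(y_i)\ge 0$. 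The desired inequality then follows from the elementary fact that $A_1B_1\ge A_2B_2$ whenever $A_1\ge A_2\ge 0$ and $B_1\ge B_2\ge 0$.

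I do not expect a real obstacle here; the work is essentially bookkeeping. The points needing care are: (i) the degenerate cases $i\notin S_v$ and $|S_v|\in\{0,1\}$, which are cleanly absorbed by starting the iterated product from the constant $1$; (ii) using that $q_{v,j}$ maps into $[0,1]$, so all factors stay non-negative (needed both for Lemma~\ref{lem:productSubmodular} and for the product-comparison step $A_1B_1\ge A_2B_2$); and (iii) being explicit that concavity of $q_{v,j}$ on $\R_+$ yields the discrete decreasing-increments property on the lattice used by the $\delta$-step definition of DR-submodularity.
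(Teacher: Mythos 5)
Your proof is correct and follows essentially the same route as the paper's own sketch: each factor $1-q_{v,j}(x_j)$ is nonnegative, monotone nonincreasing, and DR-supermodular (one-dimensional concavity of $q_{v,j}$ giving nonincreasing $\delta$-increments, trivially for coordinates outside $S_v$), and iterating Lemma~\ref{lem:productSubmodular} on the product then gives DR-supermodularity, hence monotonicity and DR-submodularity of $h_v = 1-\prod_{j\in S_v}(1-q_{v,j}(x_j))$. The explicit marginal computation you add as a sanity check is also fine but not needed.
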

\begin{proof}[Sketch]
	The proof also uses Lemma~\ref{lem:productSubmodular}, and we only need to notice that one-dimensional convexity is
	a special case of DR-supermodularity.
	%
\hfill $\square$
\end{proof}

We now justify the independent strategy activation assumption (Eq.~\eqref{eq:hv}) 
	with several application scenarios.
The first application scenario is \textit{user segment marketing}, in which each strategy $j$  targets at a disjoint subset of users $V_j$.
In this case, for each user $v$, it has a unique strategy targeted at $v$, i.e. $|S_v|=1$.

The second scenario is \textit{personalized marketing}, where each user is targeted with a personalized strategy.
The personalized discount strategies studied in~\cite{YangMPH16}
belongs to this scenario.
Technically, this scenario is a special case of the above segment marketing scenario, where the user segments $V_j$'s are all singletons,
and $d = n$.


The third scenario is repeated marketing such as \textit{multi-event marketing}.
For example, each strategy $j$ is a type of events, and $x_j$ is the number of events of type $j$.
Suppose that for each event of type $j$, a user $v$ targeted by this event has an independent probability $r_{v,j}$ to be activated as a seed,
then $q_{v,j}(x_j) = 1 - (1 - r_{v, j})^{x_j}$.
This is a concrete example where $q_{v,j}(x)$ is non-decreasing and concave, and thus by Lemma~\ref{lem:DRSubmodularq2h}	$h_v(\vx)$ is monotone and DR-submodular.

Eq.~\eqref{eq:hv} enables more efficient updates for {\LGreedy}:
	Instead of always computing $\eg_\cR(\vx+\delta\ve_j)$ from scratch in
	$\LGreedy(\eg, k, \delta)$, we compute $\Delta_j(\vx) = \eg_\cR(\vx + \delta\ve_j) - \eg_\cR(\vx)$,
	which is given below.
$$
\Delta_j(\vx) = \frac{n}{\theta}\sum_{R\in \cR}\left(\prod_{v\in R}\prod_{j'\in S_v}(1 - q_{v,j'}(x_{j'}) )\right)\cdot
$$
\begin{equation}
 \left( 1 - \frac{\prod_{v: v\in R, j\in S_v} (1 - q_{v,j}(x_j + \delta))}{\prod_{v: v\in R,  j\in S_v} (1 - q_{v,j}(x_j) )}\right).
\label{eq:delta}
\end{equation}

\begin{algorithm}[t]
	\caption{\LGreedyDelta: Efficient lattice-greedy implementation on RR sets}
	\begin{algorithmic} [1]
		\INPUT{$\cR=\{R_1, \ldots, R_\theta \}$: RR sets;
			$\{q_{v,j}\}_{v\in V, j\in S_v}$; 
			$k$: budget; $\delta$: granularity}
		\OUTPUT{$\vx \in \cX$}
		\STATE $\vx = (x_1, \cdots, x_d) = \mathbf{0}$
		\STATE // Lines~\ref{line:initializes}--\ref{line:initializeList} can be done
			while generating RR sets
		\STATE \label{line:initializes}
			$\mathbf{s} = (s_0, s_1, \cdots, s_{\theta})$ with $s_0=0$, $s_i = \prod_{v\in R_i}\prod_{j\in S_v}  (1 - q_{v,j}(x_j) )$
		\STATE $\forall j\in [d]$, $\List_j = \emptyset$
		\STATE \label{line:initializeList}
		$\forall R_i \in \cR$, $\forall v\in R_i$, $\forall j\in S_v$,
			append $(i,v)$ to $\List_j$
		\FOR{$t = 1, 2, \cdots, k\cdot\delta^{-1}$}
		\FOR{$j\in [d]$} \label{line:iteratejb}
		\STATE $\Delta_j = 0$, $\prev = 0$, $\ratio = 1$
		\FOR{$(i, v)\in \List_j$} \label{line:forListb}
		\IF{$i \neq \prev$}
		\STATE$\Delta_j = \Delta_j + s_{\prev}\cdot (1 - \ratio)$
		\STATE $\ratio = 1$
		\STATE $\prev = i$
		\ENDIF
		\STATE $\ratio = \ratio \cdot \frac{1- q_{v,j}(x_j + \delta)}{1 - q_{v,j}(x_j)}$
		\ENDFOR \label{line:forListe}
		\IF{$\prev \neq 0$}
		\STATE $\Delta_j = \Delta_j + s_{\prev}\cdot (1 - \ratio)$
		\ENDIF
		\ENDFOR \label{line:iterateje}
		\STATE $j^* = \argmax_{j\in [d]} \Delta_{j}$ \label{line:largestj}
		\STATE $\vx= \vx + \delta \mathbf{e}_{j^*}$ \label{line:movej}
		\STATE $\forall i\in [\theta]$ , $s_i = s_i \cdot \prod_{v\in R_i: j^*\in S_v} (1 - q_{v, j^*} (x_{j^*} + \delta)) \cdot (1 - q_{v, j^*}(x_{j^*}))^{-1}$ \label{line:updates}
		\ENDFOR
		\STATE \textbf{return} $\vx$
	\end{algorithmic}
	\label{alg:hill_climbing_delta}
\end{algorithm}

The advantage of Eq.~\eqref{eq:delta} is in reusing past computations.
Specifically, the term within the first parentheses is the same across all strategies, so its
	computation can be shared.
Moreover, since it is often the case that each user is only exposed to a small subset of strategies
	(i.e. $|S_v|$ is smaller than $d$), we carefully maintain a data structure to improve the efficiency
	when $|S_v| < d$.
Algorithm \ref{alg:hill_climbing_delta} presents the detailed lattice-greedy update procedure
	$\LGreedyDelta$, which replaces
	$\LGreedy(\eg, k, \delta)$ when Eq.~\eqref{eq:hv} holds.

In Algorithm \ref{alg:hill_climbing_delta}, we use $s_i$ to store the term
	$\prod_{v\in R_i}\prod_{j'\in S_v} (1 - q_{v,j'}(x_{j'}) )$ in Eq.~\eqref{eq:delta} shared across different
	strategies $j$.
We use $\ratio$ to store the ratio term $\prod_{v: v\in R, i\in S_v} (1 - q_{v,i}(x_i + \delta))(1 - q_{v,i}(x_i))^{-1} $
	in Eq.~\eqref{eq:delta}.
The $\List_j$ is a linked list for strategy $j$, and it stores the pair $(i,v)$, which
	means RR set $R_i$ contains node $v$ that can be affected by strategy $j$.
The list is ordered by RR set index $i$ first and then by node index $v$.
In each round $t$, the algorithm iterates through all strategies $j$
	(lines~\ref{line:iteratejb}--\ref{line:iterateje}) to compute $\Delta_j(\vx)$ for the current $\vx$.
In particular, for each strategy $j$, the algorithm traverses the $\List_j$
	(lines~\ref{line:forListb}--\ref{line:forListe}), and for the segment with the same RR set index
	$i$, it updates $\ratio$, and when it reaches a new RR set index ($i\ne \prev$), it cumulates $\Delta_j$
	as given in Eq.~\eqref{eq:delta} for the corresponding RR set.
The reason we maintain $\List_j$ of pairs instead of simply looping through all RR set indices $i$
	and then all nodes within $R_i$ is that RR sets are usually not very large, and it is likely
	that no node in RR set $R_i$ is affected by strategy $j$, and thus not looping through all
	RR sets save time.
After computing $\Delta_j = \Delta_j(\vx)$, we find the strategy $j^*$ with the largest $\Delta_j$
	(line~\ref{line:largestj}), move along the direction of $j^*$ for one step (line~\ref{line:movej}), and
	then update all shared terms $s_i$'s (line~\ref{line:updates}).

Suppose that the running time cost for computing each $q_{v,j}(x_j)$ is a constant.
Then we have:
\begin{restatable}{lemma}{lemtimeDelta} \label{lem:timeDelta}
The time complexity of $\LGreedyDelta$
	is $O( k \cdot \delta^{-1}\cdot (\sum_{R\in \cR}\sum_{v\in R} |S_v|))$.
\end{restatable}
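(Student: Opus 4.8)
The plan is to account for the running time of $\LGreedyDelta$ phase by phase, charging essentially all the work to the entries of the linked lists $\List_j$, whose total length $\sum_{j\in[d]}|\List_j|$ equals $N:=\sum_{R\in\cR}\sum_{v\in R}|S_v|$, since each triple $(R_i,\,v\in R_i,\,j\in S_v)$ contributes exactly one entry to $\List_j$. First I would bound the initialization (lines~\ref{line:initializes}--\ref{line:initializeList}). Under the stated assumption that each $q_{v,j}$ is evaluated in $O(1)$ time, computing $s_i=\prod_{v\in R_i}\prod_{j\in S_v}(1-q_{v,j}(0))$ costs $O(\sum_{v\in R_i}|S_v|)$, so forming all the $s_i$'s costs $O(N)$; building the lists $\List_j$ touches each of the $N$ triples once and performs an $O(1)$ append, so it is also $O(N)$ (and, as the excerpt remarks, this can be merged into RR-set generation at no extra cost).

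Next I would bound the cost of one iteration of the outer \textbf{for} loop, i.e.\ lines~\ref{line:iteratejb}--\ref{line:updates}. The key observation is that the nested loop over $j\in[d]$ and over $(i,v)\in\List_j$ visits each list entry exactly once and does $O(1)$ work on it: one multiplicative update of $\ratio$ (two $q$-evaluations), and, when it crosses an RR-set boundary, one additive update of $\Delta_j$ using the already-stored $s_{\prev}$. Hence the whole nested loop costs $O(\sum_j|\List_j|)=O(N)$, plus $O(d)$ loop overhead for strategies with empty lists. The $\argmax$ on line~\ref{line:largestj} costs $O(d)$. For the update of the shared terms $s_i$ on line~\ref{line:updates}, I would note that $s_i$ changes only for those RR sets $R_i$ that contain a node $v$ with $j^*\in S_v$, and that these pairs $(i,v)$ are precisely the entries of $\List_{j^*}$; so this update can be done in $O(|\List_{j^*}|)=O(N)$ time, avoiding a naive $O(\theta)$ scan over all RR sets. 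Thus one outer iteration costs $O(N+d)$.

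Since the outer loop runs $k\delta^{-1}$ times, summing gives a total of $O\!\left(k\delta^{-1}(N+d)\right)$ on top of the $O(N)$ initialization. To get exactly the claimed bound it remains to absorb the additive $d$: a strategy $j$ that does not appear in $S_v$ for any node $v$ of any RR set in $\cR$ has $\Delta_j(\vx)\equiv 0$ (both products in Eq.~\eqref{eq:delta} are empty), hence is never selected and may be discarded, or equivalently the outer loop may range only over strategies with $\List_j\neq\emptyset$; after this reduction $d\le N$, and the total becomes $O(k\delta^{-1}N)=O\!\left(k\delta^{-1}\sum_{R\in\cR}\sum_{v\in R}|S_v|\right)$, as claimed.

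The only mildly delicate step is the handling of line~\ref{line:updates}: one has to recognize that updating the shared terms $s_i$ after moving along $j^*$ can be localized to the RR sets enumerated by $\List_{j^*}$ rather than performed as a full pass over the $\theta$ RR sets; everything else is a straightforward amortization of per-iteration work over list entries.
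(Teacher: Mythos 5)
Your proof is correct and follows essentially the same argument as the paper's: each of the $k\delta^{-1}$ rounds performs $O(1)$ work (a constant number of $q_{v,j}$ evaluations) per tuple $(i,v,j)$ with $v\in R_i$ and $j\in S_v$, giving $O(\sum_{R\in\cR}\sum_{v\in R}|S_v|)$ per round and the claimed total. Your extra care about the $O(d)$ loop overhead for strategies with empty lists and about localizing the update of the $s_i$'s to the entries of $\List_{j^*}$ simply makes explicit implementation details that the paper's sketch leaves implicit.
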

\begin{proof}[Proof of Lemma~\ref{lem:timeDelta} (Sketch)]
	The algorithm has totally $k \delta^{-1}$ rounds.
	In each round, it enumerates all tuples $(i,v,j)$ for RR set $R_i$, node $v\in R_i$ and
	strategy $j \in S_v$, and for each tuple it has a constant number of calls to function $q_{v,j}$,
	so the running time in one round $t$ is $O(\sum_{R\in \cR}\sum_{v\in R} |S_v|)$.
\hfill $\square$
\end{proof}

Notice that if we compute $\eg(\vx + \delta \ve_j)$ directly instead of $\Delta_j(\vx)$,
	we have $T_{h_v} = O(d)$.
Then time complexity is
	$O(k \cdot \delta^{-1} \cdot d \cdot \sum_{R\in \cR}\sum_{v\in R} |S_v|)$, which is worse than
	$\LGreedyDelta$ by a factor of $d$.


\vspace{2mm}
\noindent
{\bf The First Phase $\Sampling$ Procedure.\ \ }
The $\Sampling$ procedure in the first phase is to generate enough RR sets $\cR$ to provide
	the theoretical guarantee on the approximation ratio.
It is a minor variation of the $\Sampling$ procedure of {\IMM} in~\cite{tang15}.
In particular, they show that the number of RR sets
	$\theta = \Theta(n \log n / \OPT)$ is enough, where $\OPT$ is the optimal solution.
They estimate a lower bound $\LB$ of $\OPT$ by iteratively guessing
	$n/2, n/4, n/8, \ldots$ as lower bounds, and using the greedy procedure on obtained
	RR sets to verify if the guess is correct.
We use the same procedure, with only two differences:
	(a) we use $\LGreedyDelta$ procedure to replace the greedy procedure on RR sets; and
	(b) we replace $\ln \binom{n}{k}$ with $\min(k \delta^{-1} \ln d, d \ln(k\delta^{-1}))$ in the two parameters
	$\lambda'$ and $\lambda^*$, because both $d^{k\delta^{-1}}$ and $(k\delta^{-1})^d$ are
	upper bounds on the number of vectors satisfying the constraint $|\vx| \le k$.
The bound $d^{k\delta^{-1}}$ is because we have $k\delta^{-1}$ greedy steps and each step selects
	one dimension among $d$ dimensions, and the bound $(k\delta^{-1})^d$ is because
	each dimension has at most $k\delta^{-1}$ choices and we have $d$ dimensions combined together.
We can see that when $d$ is large (e.g. personalized marketing with $d=n$) but $k\delta^{-1}$ is relatively small
	(coarse granularity), we would use $k\delta^{-1} \ln d$, but when $k\delta^{-1}$ is large (fine granularity) but $d$ is small (e.g. only a few global strategies), we could use 
	$d \ln(k\delta^{-1})$.
Henceforth, we let $M = \min(k \delta^{-1} \ln d, d \ln(k\delta^{-1}))$.
The pseudocode for the $\Sampling$ procedure is included in Algorithm~\ref{alg:sampling}, 
	with parameter $\lambda^*(\ell)$ defined below. 
\begin{align}
	&\lambda^*(\ell)  = 2n\cdot \left((1-1/e)\cdot\alpha + \beta\right)^2\cdot\varepsilon^{-2},
	\label{eq:lambda*} \\
	&\alpha = \sqrt{\ell\ln n + \ln 2},  
	\beta = \sqrt{(1-1/e)\cdot\left(M + \alpha^2 \right)}. \nonumber
\end{align}
\begin{algorithm}[t]
	\caption{First phase $\Sampling$ procedure}
	\begin{algorithmic} [1]
		\INPUT{$G$: the social graph; $\{D_v\}_{v\in V}$: triggering model parameters; 
			$\{q_{v,j}\}_{v\in V, j\in S_v}$: strategy-node activation functions; 
			$k$: budget, $\delta$: granularity; $\varepsilon$: accuracy; $\ell$: confidence}
		\OUTPUT{A collection of RR sets $\cR$}
		\STATE $\cR = \emptyset$; $\LB = 1$
		\STATE \label{alg:workaroundb}
		compute $\gamma$ via binary search such that 
		$\lceil\lambda^*(\ell + \gamma) \rceil /n^{\ell+\gamma} \le 1/n^\ell$ 
		// workaround 2 in~\cite{Chen18}, with $\lambda^*(\ell)$ defined in Eq.~\eqref{eq:lambda*}
		\STATE  \label{alg:workarounde} $\ell=\ell+\gamma + \ln 2 / \ln n$
		\STATE Let $\varepsilon' = \sqrt{2}\cdot \varepsilon$
		\FOR{$i=1, 2, \cdots, \log_2 n$} \label{line:forestimateb}
		\STATE Let $y = n/2^i$
		\STATE $\theta_i = \frac{\lambda'}{y}$, where $\lambda' = \frac{\left(2 + \frac{2}{3}\varepsilon'\right)\cdot\left(M + \ell\cdot\ln n + \ln\log_2n\right)\cdot n}{\varepsilon'^2}$.
		\WHILE{$|\cR|\leq \theta_i$}
		\STATE Select a node $v$ from $G$ uniformly at random
		\STATE Generate an RR set for $v$, and insert it into $\cR$
		\ENDWHILE
		\STATE $\vx =\LGreedyDelta(\cR, \{q_{v,j}\}_{v\in V, j\in S_v}, k, \delta)$
		\label{line:callhillclimb}
		\IF{$\eg_\cR(\vx) \geq (1+\varepsilon')\cdot y$} \label{line:verifyifb}
		\STATE $\LB = \eg_\cR(\vx )/ (1+\varepsilon')$
		\STATE \textbf{break}
		\ENDIF \label{line:verifyife}
		\ENDFOR \label{line:forestimatee}
		\STATE $\theta = \lambda^*(\ell) / \LB$, where $\lambda^*(\ell)$ is defined in Eq.~\eqref{eq:lambda*}
		\label{line:settheta}
		\WHILE{$|\cR|\leq\theta$}
		\STATE Select a node $v$ from $G$ uniformly at random
		\STATE Generate an RR set for $v$, and insert it into $\cR$
		\ENDWHILE
		\STATE \textbf{return} $\cR$
	\end{algorithmic}
	\label{alg:sampling}
\end{algorithm}

We remark that Chen pointed out an issue in the original {\IMM} algorithm 
and provided two workarounds~\cite{Chen18}, and we adopt the more efficient
workaround 2 (lines~\ref{alg:workaroundb}-\ref{alg:workarounde}).
Algorithms~\ref{alg:IMMGMS}, \ref{alg:hill_climbing_delta}, and~\ref{alg:sampling}
	form the {\IMMPRR} algorithm.
The following theorem summarizes the theoretical guarantee of the
	{\IMMPRR} algorithm.

\begin{restatable}{theorem}{thmIMMGMS}
Under the case of independent strategy activation (Eq.~\eqref{eq:hv}),
	the {\IMMPRR} algorithm
	returns a $(1-1/e-\varepsilon)$-approximate solution
	to the LIM problem with at least $1 - 1/n^\ell$ probability.
When $q_{v,j}$'s are such that the
	optimal solution of LIM
	is at least as good as the best single node influence spread, {\IMMPRR} runs in $O(k\delta^{-1}(\max_{v\in V}|S_v|)(M+\ell\log n)(n + m)/  \varepsilon^2)$ expected time,
	where $M= \min(k \delta^{-1} \ln d, d \ln(k\delta^{-1}))$.
\label{thm:IMMGMS}
\end{restatable}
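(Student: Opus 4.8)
### Proof Proposal for Theorem~\ref{thm:IMMGMS}

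The plan is to mirror the analysis of the original {\IMM} algorithm~\cite{tang15} (with Chen's workaround~\cite{Chen18}), adapting each step to the lattice setting where ``coverage'' is replaced by the partial-coverage quantity of Lemma~\ref{lem:RRSetGMS}. First I would establish the two structural ingredients that make the {\IMM} argument go through: (a) by Lemma~\ref{lem:DRSubmodularq2h} each $h_v$ is monotone and DR-submodular, hence by Lemma~\ref{lem:htogsubmodular} both $g$ and $\eg_{\cR}$ are monotone and DR-submodular, so $\LGreedy$ (equivalently $\LGreedyDelta$, which computes exactly the same marginals by Eq.~\eqref{eq:delta}) achieves a $1-1/e$ factor against the best feasible vector \emph{for the empirical objective} $\eg_{\cR}$; and (b) for every fixed $\vx$, $\eg_{\cR}(\vx)$ is an average of $\theta$ i.i.d.\ (more precisely, martingale-structured, as the RR sets are generated incrementally) random variables each bounded in $[0,n]$ with mean $g(\vx)$, so concentration via the same Chernoff/martingale bounds used in~\cite{tang15} applies verbatim. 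The only change in the union bound is the cardinality of the feasible lattice: instead of $\binom{n}{k}$ candidate seed sets we have at most $\min(d^{k\delta^{-1}}, (k\delta^{-1})^d)$ feasible vectors, which is why $\ln\binom{n}{k}$ is replaced by $M=\min(k\delta^{-1}\ln d,\, d\ln(k\delta^{-1}))$ throughout $\lambda'$ and $\lambda^*$.

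Next I would run the standard two-part {\IMM} correctness argument with these substitutions. For the \emph{quality} bound: conditioned on the final $\theta \ge \lambda^*(\ell)/\OPT$, a lower-tail bound shows $\eg_{\cR}(\vx^*) \ge (1-\varepsilon_1)\OPT$ for the true optimum $\vx^*$ with high probability, and a simultaneous upper-tail union bound over all feasible vectors shows that any $\vx$ with $\eg_{\cR}(\vx)$ large also has $g(\vx)$ large; combining these with the $1-1/e$ guarantee of $\LGreedyDelta$ on $\eg_{\cR}$ and the choice of constants in Eq.~\eqref{eq:lambda*} gives $g(\evx^o) \ge (1-1/e-\varepsilon)\OPT$. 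For the \emph{sample-size} bound: the estimation loop (lines~\ref{line:forestimateb}--\ref{line:forestimatee}) correctly certifies a lower bound $\LB$ with $\LB \le \OPT$ and $\LB = \Omega(\OPT)$ with high probability, exactly as in~\cite{tang15}; Chen's workaround 2 (lines~\ref{alg:workaroundb}--\ref{alg:workarounde}) patches the failure-probability accounting so the overall success probability is $\ge 1-1/n^\ell$. Here I must double-check that every place the original proof invokes submodularity of $\sigma$ or monotonicity of coverage, the DR-submodularity of $\eg_{\cR}$ from Lemma~\ref{lem:htogsubmodular} is a legitimate drop-in replacement — this is the main conceptual (though not technically deep) point.

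For the \emph{running time} I would argue as follows. The number of RR sets is $\theta = \Theta\!\big((M+\ell\log n)\,n/(\varepsilon^2\,\OPT)\big)$; the assumption that $\OPT$ is at least the best single-node influence spread $\max_v \sigma(\{v\})$ lets us bound $\theta = O\!\big((M+\ell\log n)\,n/(\varepsilon^2\,\max_v\sigma(\{v\}))\big)$, and a standard argument (as in~\cite{tang15,BorgsBrautbarChayesLucier}) shows the expected total cost of \emph{generating} $\theta$ RR sets is $O(\theta \cdot \EPT)$ where the expected per-set cost $\EPT$ satisfies $\EPT \le (m/n)\cdot\text{(something)}$; together with $\max_v\sigma(\{v\}) \ge$ (expected size of a random RR set up to the $n$ factor) this collapses to $O\big((M+\ell\log n)(n+m)/\varepsilon^2\big)$ expected time for sampling, identical to {\IMM}. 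For the greedy phase, Lemma~\ref{lem:timeDelta} gives $O\!\big(k\delta^{-1}\sum_{R\in\cR}\sum_{v\in R}|S_v|\big) \le O\!\big(k\delta^{-1}(\max_v|S_v|)\sum_{R\in\cR}|R|\big)$, and $\E[\sum_{R}|R|] = \theta\cdot\E[|R|]$ is again $O((M+\ell\log n)(n+m)/\varepsilon^2)$ by the same RR-set-size accounting (the estimation loop only generates fewer sets, so it is absorbed). Multiplying by the $k\delta^{-1}\max_v|S_v|$ factor yields the claimed $O\!\big(k\delta^{-1}(\max_{v\in V}|S_v|)(M+\ell\log n)(n+m)/\varepsilon^2\big)$ bound.

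I expect the main obstacle to be purely bookkeeping rather than conceptual: carefully re-deriving the two parameters $\lambda'$ and $\lambda^*$ so that the replacement of $\ln\binom{n}{k}$ by $M$ interacts correctly with Chen's workaround-2 re-parametrization of $\ell$, and making sure the martingale concentration inequalities of~\cite{tang15} still apply to the partial-coverage random variables $1-\prod_{v\in R}(1-h_v(\vx)) \in [0,1]$ (they do, since boundedness in $[0,1]$ and the incremental generation of RR sets are the only properties used). The running-time argument's reliance on ``$\OPT \ge$ best single-node spread'' to control $\theta$ is the one genuinely new hypothesis, and I would state explicitly why it is needed (without it, $\OPT$ could in principle be as small as $\Theta(1)$ only if no single seed helps, which is pathological) and why it holds in all the application scenarios of Section~\ref{sec:application}.
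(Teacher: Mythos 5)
Your proposal is correct and follows essentially the same route as the paper's own proof: it adapts the {\IMM} analysis of~\cite{tang15} with Chen's workaround~\cite{Chen18}, replacing $\ln\binom{n}{k}$ by $M$ in $\lambda'$ and $\lambda^*$, invoking the DR-submodularity of $\eg_{\cR}$ (Lemmas~\ref{lem:htogsubmodular} and~\ref{lem:DRSubmodularq2h}) for the $1-1/e$ greedy guarantee, and using the assumption $\OPT \ge \max_v \sigma(\{v\})$ to bound $\EPT \le m\cdot\OPT/n$, combined with Lemma~\ref{lem:timeDelta} for the $k\delta^{-1}\max_v|S_v|$ factor in the greedy phase. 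No substantive gap.
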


The proof of the theorem mainly follows the analysis of {\IMM} in \cite{tang15}, and 
	the novel part of the analysis is already mostly shown in the previous lemmas.
The remaining part of the proof is given in Appendix~\ref{app:proofThm1}.
Note that the technical assumption above assuming the optimal solution is at least as good as the best single node influence spread is reasonable, since
	it means the budget and the functions $q_{v,j}$'s are at least good enough to activate one single best node.
If it is not true, the entire marketing scheme is not very useful anyway.
Comparing to the time complexity $O((k+\ell)(m+n) \log n /\varepsilon^2)$ of {\IMM} in~\cite{tang15},
	the main added difficulty is that a strategy can only partially cover an RR set
	(Lemma~\ref{lem:RRSetGMS}), which implies that in each greedy step we have to
	process all RR sets.
We will overcome this issue by an alternative reduction approach in the next subsection.

\subsection{Algorithm {\IMMVSN} for Independent Strategy Activation}

In this subsection, we consider an alternative design choice under independent strategy activation.
The idea is that since each strategy independently activates nodes, we may be able to 
	introduce virtual nodes representing strategies such that the LIM model is
	reduced to the classical triggering model, and then 
	we could apply algorithms such as IMM to solve
	the classical influence maximization problem under the reduced model.
It turns out that we need to incorporate a mixture of LT and IC models for the interaction between the virtual nodes
	and the real nodes, and carefully argue about the equivalence between LIM and the reduced
	model.
We refer this new algorithm as {\IMMVSN} (VSN stands for virtual strategy nodes).
	
In {\IMMVSN}, for every strategy $j$, we construct virtual strategy node set 
	$U_j = \{u_{j,1}, u_{j,2}, \ldots, u_{j,k\delta^{-1}}\}$, 
	and for every real node $v$ in the original graph and every strategy $j\in S_v$, 
	we connect every virtual node $u_{j,i}$ to $v$
	with a directed virtual edge $(u_{j,i},v)$.
Let $U = \bigcup_{j=1}^d U_j$ be the set of all virtual nodes.
The purpose is such that the prefix set $U_{j,i}=\{u_{j,1}, \ldots, u_{j,i}\}$ 
	corresponds to the quantity $x_j = i\delta$
	for strategy $j$, and if nodes in $U_{j,i}$ are seeds, 
	then real node $v$ is activated with probability $q_{v,j}(i\delta)$, the probability that 
	amount $i\delta$ of strategy $j$ would activate $v$ 
 	(see Eq.\eqref{eq:hv}).
To do so, we utilize the LT model as follows.
For each edge $(u_{j,i},v)$, we assign LT weight 
\begin{equation}  \label{eq:LTweight}
	w(u_{j,i},v) = q_{v,j}(i\delta) -  q_{v,j}((i-1)\delta).
\end{equation}
When a seed set $S\subseteq U$ of virtual nodes attempts to activate a real node $v$, 
	we first consider seed set within each strategy $S\cap U_j$, and nodes in  $S\cap U_j$ 
	attempt to activate $v$
	following the LT model with weights defined in Eq.~\eqref{eq:LTweight}.
Then among different strategies, their attempts to activate $v$ are independent, and $v$ is
	activated as long as seeds from one strategy activates $v$.
This is a mixture of IC and LT models, and is our key to allow the reduction to work.

We denote the augmented graph together with the above described propagation model as $G_A$.
In $G_A$, only virtual nodes can be selected as seeds, and only real nodes are counted towards
	the influence spread.
The propagation in $G_A$ starts from the seeds in the virtual strategy nodes, and 
	these seeds activate real nodes according to the above IC and LT mixture model. 
Then the propagation among real nodes follow the original triggering model.
The reason this reduction works is justified by the following theorem.

\begin{restatable}{theorem}{thmVSNreduction}
\label{thm:VSNreduction}
Under the independent strategy activation model (Eq.~\eqref{eq:hv}), 
	(1) for any strategy mix $\vx=(x_1, \ldots, x_d) \in \cX$, 
	the distribution of the set of nodes
	activated by $\vx$ in the LIM model is the same as the distribution of the set of real nodes
	activated by seed set $S^{\vx} = \bigcup_{j=1}^d U_{j,x_j\delta^{-1}}$ in $G_A$.
(2) Conversely, for any seed set $S\subseteq U$, we can map $S$ to $\vx^S =(x^S_1, \ldots, x^S_d) $
	where $x^S_j = |S \cap U_j|\cdot \delta$, such that 
	the influence spread of $S$ in $G_A$
	(only counting the activation of the real nodes) is at most the influence spread
	of $\vx^S$ in the LIM model.
As a consequence, if an approximation algorithm for the triggering model 
	produces $S$ on graph $G_A$, then
	$\vx^S$ would be an approximate solution for LIM with the same approximation ratio.
\end{restatable}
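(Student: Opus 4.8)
The plan is to establish the two parts of Theorem~\ref{thm:VSNreduction} by reducing everything to a statement about the probability that the virtual seeds activate each real node, and then lift that to a coupling of the two propagation processes.

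\textbf{Part (1): distributional equivalence.} First I would fix an arbitrary strategy mix $\vx=(x_1,\ldots,x_d)$ and set $i_j = x_j\delta^{-1}$, so the corresponding virtual seed set is $S^{\vx} = \bigcup_j U_{j,i_j}$. The key computation is the activation probability of a single real node $v$ by $S^{\vx}$ in $G_A$. Within a single strategy $j\in S_v$, the seeds $S^{\vx}\cap U_j = U_{j,i_j}$ attempt to activate $v$ via the LT model on edges $(u_{j,1},v),\ldots,(u_{j,i_j},v)$ with weights $w(u_{j,\ell},v)=q_{v,j}(\ell\delta)-q_{v,j}((\ell-1)\delta)$. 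Since in the LT model $v$ picks at most one in-neighbor with probability proportional to the weight, and every seed among $U_{j,i_j}$ is already active, $v$ is activated by strategy $j$ precisely when the chosen in-neighbor lies in $U_{j,i_j}$ (or, equivalently, when the LT threshold falls below $\sum_{\ell=1}^{i_j} w(u_{j,\ell},v)$). That telescoping sum equals $q_{v,j}(i_j\delta)-q_{v,j}(0)=q_{v,j}(x_j)$ since $q_{v,j}(0)=0$. Here I need the weights to be nonnegative (so they form a valid LT instance), which holds because each $q_{v,j}$ is non-decreasing, and to sum to at most $1$, which holds because $q_{v,j}(i_j\delta)\le 1$. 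Then, because the attempts across different strategies $j\in S_v$ are independent by construction of the IC/LT mixture, $v$ is activated as a seed with probability $1-\prod_{j\in S_v}(1-q_{v,j}(x_j)) = h_v(\vx)$ by Eq.~\eqref{eq:hv}. Moreover these seed-activation events are independent across distinct real nodes $v$, since the LT threshold randomness and the IC-style independent attempts are drawn independently per node. Hence the set of real nodes directly activated by $S^{\vx}$ in $G_A$ has exactly the distribution $\prod_{u\in S}h_u(\vx)\prod_{v\notin S}(1-h_v(\vx))$ of the LIM seed set. After this initial seeding, the propagation among real nodes in $G_A$ is by assumption the original triggering model, which is identical to the LIM propagation. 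Composing the two stages (identical seed distribution, identical subsequent dynamics) gives that the final active real-node set has the same distribution in both models; in particular $g(\vx) = $ (influence spread of $S^{\vx}$ in $G_A$, counting only real nodes).

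\textbf{Part (2): arbitrary virtual seed sets.} Given $S\subseteq U$, define $x^S_j = |S\cap U_j|\cdot\delta$. I would compare, node by node, the probability that $S$ activates a real node $v$ in $G_A$ versus the probability that $S^{\vx^S}$ does. Fix strategy $j$; $S\cap U_j$ is some subset of $U_j$ of size $i := x^S_j\delta^{-1}$, not necessarily a prefix. Under the LT model the activation probability of $v$ from $S\cap U_j$ is $\sum_{u_{j,\ell}\in S\cap U_j} w(u_{j,\ell},v)$, whereas from the prefix $U_{j,i}$ it is $\sum_{\ell=1}^{i} w(u_{j,\ell},v) = q_{v,j}(i\delta)$. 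The crucial inequality is that the prefix maximizes this sum among all size-$i$ subsets, i.e. $\sum_{u_{j,\ell}\in S\cap U_j} w(u_{j,\ell},v) \le q_{v,j}(i\delta)$. This follows from concavity of $q_{v,j}$: concavity makes the increments $w(u_{j,\ell},v) = q_{v,j}(\ell\delta)-q_{v,j}((\ell-1)\delta)$ non-increasing in $\ell$, so the sum of any $i$ of them is at most the sum of the first $i$. Consequently, for each $j\in S_v$ the activation probability from $S\cap U_j$ is at most that from the prefix, and since activation is a monotone function of these per-strategy probabilities under the independent combination $1-\prod_{j}(1-\cdot)$, the overall seed-activation probability of $v$ under $S$ is at most $h_v(\vx^S)$. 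Because this holds simultaneously for every $v$ (and one can actually couple so that the seed set activated by $S$ is a subset of the one activated by $S^{\vx^S}$), monotonicity of the triggering-model influence spread $\sigma(\cdot)$ — already invoked in the setup — yields that the influence spread of $S$ in $G_A$ is at most $g(\vx^S)$, which by Part (1) is the influence spread of $S^{\vx^S}$, hence of $\vx^S$ in LIM.

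\textbf{Consequence.} Finally I would spell out the approximation-transfer claim: let $\OPT_{G_A}$ be the optimal influence spread over seed sets of size $k\delta^{-1}$ in $G_A$ and $\OPT_{\mathrm{LIM}}$ the LIM optimum under $|\vx|\le k$. Any feasible $\vx$ maps to $S^{\vx}$ of size $|\vx|\delta^{-1}\le k\delta^{-1}$ with equal spread (Part 1), so $\OPT_{G_A}\ge\OPT_{\mathrm{LIM}}$; conversely any size-$(k\delta^{-1})$ seed set $S$ maps to $\vx^S$ with $|\vx^S|\le k$ and no smaller spread (Part 2), so $\OPT_{\mathrm{LIM}}\ge\OPT_{G_A}$; hence they are equal. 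If an algorithm returns $S$ with spread $\ge\rho\cdot\OPT_{G_A}$, then $\vx^S$ has LIM spread $\ge$ spread of $S$ $\ge\rho\cdot\OPT_{G_A}=\rho\cdot\OPT_{\mathrm{LIM}}$, so $\vx^S$ is a $\rho$-approximation for LIM.

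The main obstacle is the per-node LT computation in Part (1): getting the telescoping of the weights right and, more delicately, justifying that the LT randomness plus the cross-strategy IC-style independence really does produce mutually independent seed-activation events across real nodes with exactly the marginals $h_v(\vx)$ — i.e. that the ``IC/LT mixture'' is defined so that this holds. Once that per-node equivalence is nailed down, Part (2)'s concavity argument and the approximation transfer are routine.
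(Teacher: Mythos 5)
Your proposal is correct and follows essentially the same route as the paper's proof: the telescoping LT-weight computation giving per-node activation probability $q_{v,j}(x_j)$ and hence $h_v(\vx)$ for part (1), the concavity-implies-nonincreasing-increments argument showing the prefix set dominates any equal-size subset for part (2), and the straightforward transfer of the approximation ratio. Your write-up is, if anything, slightly more explicit than the paper (e.g.\ checking the LT weights form a valid instance and spelling out the $\OPT$ equality), but there is no substantive difference in approach.
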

\begin{proof}
	First, given strategy mix $\vx$, by the LT model and our weight construction 
	(Eq.~\eqref{eq:LTweight}), we know that the probability that the seed set $S^{\vx} \cap U_j = U_{j,x_j\delta^{-1}}$ 
	activates node $v$ in $G_A$ is $\sum_{i=1}^{x_j\delta^{-1}} w(u_{j,i},v) = q_{v,j}(x_j)$, which
	coincides with the probability that strategy $j$ with amount $x_j$ would activate $v$ in
	the LIM model.
	Among different strategy seed nodes, they attempt to activate $v$ independently, which
	coincide with Eq.~\eqref{eq:hv} that governs the activation of $v$ from strategy $\vx$.
	Since the remaining propagation among real nodes follows the same model, we can conclude that
	the set of nodes activated in either the LIM model or $G_A$ follows the same distribution. 
	
	Conversely, let $S$ be a seed set in $G_A$. 
	For each strategy $j$, $S \cap U_j$ may not be the prefix set.
	Let $U_{j,x^S_j}$ be the corresponding prefix set with $x^S_j = |S \cap U_j|$.
	We claim that $U_{j,x^S_j}$ activates $v$ with probability at least as high as
	that of $S \cap U_j$ activating $v$.
	Here, we need to critically use the concaveness of $q_{v,j}$: by its concaveness, we know that
	edge weight $w(u_{j,i},v)$ is non-increasing over $i$.
	Then the sum of weights of the prefix set $U_{j,x^S_j}$ to $v$ is at least as large as the 
	sum of the weights of $S \cap U_j$ to $v$.
	Thus, by the LT model, our claim holds.
	Once the claim holds, we know that by moving the seeds to the prefix we always have a higher probability
	of activating each real node. By the first part of the proof, we know that the prefix
	seed set exactly corresponds to the strategy mix $\vx^S = (x^S_1, \ldots, x^S_d)$.
	Therefore, the influence spread of $\vx^S$ in the LIM model
	must be at least as high as the influence spread of
	$S$ in $G_A$.
	
	The final part on the approximation algorithm becomes straightforward once we have the above
	results.
\hfill $\square$
\end{proof}

We remark that part (2) of the theorem critically depends on the concaveness of $q_{v,j}$, 
	and is where we need
	to use the LT model construction.
We could use the IC model with proper edge probability assignment for part (1), but
	it appears that IC model would not allow us to use the concaveness of $q_{v,j}$ to show
	part (2).
This is why we use a mixture of the IC and LT models in the end.

\begin{algorithm}[t]
	\caption{General Structure of Algorithm \IMMVSN}
	\begin{algorithmic} [1]
		\INPUT{$G$: the social graph; $\{D_v\}_{v\in V}$: triggering model parameters;
			$\{q_{v,j}\}_{v\in V, j\in S_v}$: strategy-node activation functions; 
			$k$: budget;
			$\delta$: granularity; $\varepsilon$: accuracy; $\ell$: confidence}
		\OUTPUT{$\vx \in \cX$}
		\STATE generate augmented graph $G_A$ and the diffusion model on it as follows: 
			(1) add virtual strategy nodes $U = \bigcup_{j=1}^d U_j$ to the node set, where 
				$U_j = \{u_{j,1}, u_{j,2}, \ldots, u_{j,k\delta^{-1}}\}$;
			(2) add directed edges $\{(u_{j,i},v) | v\in V, j \in S_v, u_{j,i}\in U_j\} $ to the edge set;
			(3) each edge $(u_{j,i},v)$ has LT weight $w(u_{j,i},v) = q_{v,j}(i\delta) -  q_{v,j}((i-1)\delta)$;
			(4) triggering set distribution of every real node $v$ is adjusted such that: 
				(4.1) real nodes are selected by $D_v$;
				(4.2) virtual nodes in $U_j$ with $j\in S_v$ are selected independent of real nodes and other virtual nodes; 
				(4.3) within $U_j$, virtual node $u_{j,i}$ is selected exclusively with probability
					$w(u_{j,i},v)$, just like in the LT model
		\STATE run $\IMM$ on graph $G_A$ with budget $k\delta^{-1}$ and obtain seed set $S\subseteq U$ on virtual nodes.
			$\IMM$ is adapted for $G_A$ as described in the text
		\STATE $\vx = (x^S_1, \ldots, x^S_d) $
		where $x^S_j = |S \cap U_j|\cdot \delta$
		\STATE \textbf{return} $\vx$
	\end{algorithmic}
	\label{alg:IMMVSN}
\end{algorithm}

With Theorem~\ref{thm:VSNreduction}, our algorithmic design for {\IMMVSN} is clear, and its general 
	structure is summarized in Algorithm~\ref{alg:IMMVSN}:
	We first construct the augmented graph $G_A$, and then apply an existing algorithm, in our case
	{\IMM}, on $G_A$ to find a seed set $S$ of virtual nodes with budget $k\delta^{-1}$, 
	and finally we convert $S$ to $\vx^S$ as specified in Theorem~\ref{thm:VSNreduction} as
	our solution.
When using {\IMM}, we also employ the following adaptations to improve its performance for
	the special $G_A$ graph:
(a) At each real node $v$ when we want to generate one more step in the reverse simulation, 
	we first sample $v$'s triggering set $T_v\sim D_v$ and put nodes in $T_v$ in the RR set, and these are real nodes;
	then for each strategy $j\in S_v$, we randomly pick at most one 
	virtual node $u_{j,i}$ with probability $w(u_{j,i},v)$ following the LT model, and 
	and this can be efficiently implemented by a binary search;
	finally, we do reverse simulation for each strategy $j$ independently, which corresponds to the
	independent activation across different strategies.
(b) Since only virtual nodes are seeds, an RR set without virtual nodes will be discarded, and
	greedy seed selection is only among the virtual nodes.
(c) Since only real nodes are counted towards the influence spread, we only uniformly at random pick
	roots of RR sets among real nodes.
(d) By part (2) of Theorem~\ref{thm:VSNreduction}, in the greedy {\NodeSelection}
	procedure of {\IMM} (corresponding to the {\LGreedy} procedure in {\IMMPRR}), 
	after selecting all the seed nodes, we 
	convert them to the prefix node set for each strategy.
(e) The total number of possible strategy mixes is at most $M= \min(k \delta^{-1} \ln d, d \ln(k\delta^{-1}))$ as discussed 
	in Section~\ref{sec:IMMPRR}, and together with
	part (d) above, we know the total number of seed set outputs is also at most 
	$M$, therefore, we will use $M$ to replace $\binom{n}{k}$ in
	the original {\IMM} algorithm.

The approximation guarantee of {\IMMVSN} is ensured by the correctness of the
	{\IMM} algorithm plus Theorem~\ref{thm:VSNreduction}.
For time complexity, our adaptions to {\IMM} save running time.
Overall, we have
\begin{restatable}{theorem}{thmIMMGMSVSN}
	Under the case of independent strategy activation (~Eq.\eqref{eq:hv}),
	the {\IMMVSN} algorithm
	returns a $(1-1/e-\varepsilon)$-approximate solution
	to the LIM problem with at least $1 - 1/n^\ell$ probability.
	When $q_{v,j}$'s are such that the
	optimal solution of LIM
	is at least as good as the best single node influence spread,
	{\IMMVSN} runs in $O((M + \ell \log n)(m + \log(k\delta^{-1})\sum_{v\in V} |S_v|) /\varepsilon^2)$ expected time, where $M= \min(k \delta^{-1} \ln d, d \ln(k\delta^{-1}))$.
	\label{thm:IMMGMSVSN}
\end{restatable}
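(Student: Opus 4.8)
The plan is to establish the two claims of the theorem in sequence, exactly paralleling the structure suggested by Theorem~\ref{thm:VSNreduction} and the {\IMM} analysis of~\cite{tang15}: first the $(1-1/e-\varepsilon)$-approximation with confidence $1-1/n^\ell$, then the expected running-time bound.

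For the approximation guarantee, the first observation is that the augmented model on $G_A$ described in Algorithm~\ref{alg:IMMVSN} is a bona fide triggering model: each real node $v$ draws its triggering set as the union of a sample from $D_v$ over real in-neighbors and, for each strategy $j\in S_v$, an independent LT-style choice of at most one virtual node $u_{j,i}$ with probability $w(u_{j,i},v)$, while virtual nodes have no in-edges. Hence the function $f(S)=\E[\#\text{real nodes activated by a virtual seed set }S]=\sum_{v\in V}\Pr[v\text{ activated by }S]$ is nonnegative, monotone and submodular as a set function on $2^U$, and it admits the reverse-reachable representation $f(S)=n\cdot\E_R[\I\{S\cap R\neq\emptyset\}]$ with $R$ rooted at a uniformly random real node — precisely the two ingredients {\IMM} relies on. The adaptations (b) and (c) merely restrict the candidate set to $U$ and the root distribution to $V$, and adaptation (e) replaces $\ln\binom{n}{k}$ by $M=\min(k\delta^{-1}\ln d,\, d\ln(k\delta^{-1}))$ in the parameters $\lambda'$ and $\lambda^*(\ell)$; this is legitimate because, after the prefix conversion of adaptation (d), the greedy {\NodeSelection} in both the estimation rounds and the final round always outputs one of at most $e^{M}$ distinct feasible prefix seed sets (one per strategy mix with $|\vx|\le k$), so the union bound in the {\IMM} analysis, together with the workaround of~\cite{Chen18} already incorporated in Algorithm~\ref{alg:sampling}, goes through verbatim. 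Therefore, with probability at least $1-1/n^\ell$, the returned seed set $S$ satisfies $f(S)\ge(1-1/e-\varepsilon)\max_{S'\subseteq U,\,|S'|\le k\delta^{-1}}f(S')$.

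To transfer this bound to LIM I invoke Theorem~\ref{thm:VSNreduction}. By part~(1), the optimal LIM solution $\vx^*$ (with $|\vx^*|\le k$) corresponds to the prefix seed set $S^{\vx^*}=\bigcup_{j=1}^d U_{j,x^*_j\delta^{-1}}$, which has size $\sum_j x^*_j\delta^{-1}=|\vx^*|\delta^{-1}\le k\delta^{-1}$ and satisfies $f(S^{\vx^*})=g(\vx^*)=\OPT$, so $\max_{S'}f(S')\ge\OPT$. By part~(2), the output $\vx^S=(x^S_1,\ldots,x^S_d)$ with $x^S_j=|S\cap U_j|\delta$ lies in $\cX$, has $|\vx^S|=\delta|S|\le k$, and obeys $g(\vx^S)\ge f(S)$. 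Chaining these inequalities yields $g(\vx^S)\ge f(S)\ge(1-1/e-\varepsilon)\max_{S'}f(S')\ge(1-1/e-\varepsilon)\OPT$, which is the claimed approximation.

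For the running time, I rerun the {\IMM} time analysis with two modifications: the budget is $k\delta^{-1}$, and $\ln\binom{n}{k}$ is replaced by $M$, so the number of RR sets in the final round is $\theta=\lambda^*(\ell)/\LB=O\big(n(M+\ell\log n)/(\varepsilon^2\,\OPT)\big)$ in expectation, and the RR sets generated over the $O(\log n)$ estimation rounds sum to the same order by the usual geometric argument. Generating one reverse-BFS step at a real node $v$ costs $O(\mathrm{indeg}_G(v)+|S_v|\log(k\delta^{-1}))$: the in-degree term for sampling $T_v\sim D_v$, and the second term because for each $j\in S_v$ we select at most one of the $k\delta^{-1}$ virtual nodes of $U_j$ by binary search, and virtual nodes are never expanded. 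The expected number of RR sets whose reverse BFS reaches $v$ equals $\theta\cdot\Pr[v\in R]=\theta\cdot\sigma(\{v\})/n$, and by the standing assumption $\sigma(\{v\})\le\OPT$; summing over $v$ gives expected RR-generation cost $\sum_{v\in V}\theta\cdot\frac{\sigma(\{v\})}{n}\cdot O\big(\mathrm{indeg}_G(v)+|S_v|\log(k\delta^{-1})\big)=O\big(\frac{\theta\,\OPT}{n}(m+\log(k\delta^{-1})\sum_{v}|S_v|)\big)=O\big((M+\ell\log n)(m+\log(k\delta^{-1})\sum_{v}|S_v|)/\varepsilon^2\big)$. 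The greedy {\NodeSelection}, summed over all rounds, costs $O(\sum_R|R\cap U|)\le O(\sum_R\sum_{v\in R}|S_v|)=O\big((M+\ell\log n)\sum_v|S_v|/\varepsilon^2\big)$, which is dominated, as is the $O(k\delta^{-1})$ prefix conversion; adding the bounds gives the stated complexity. I expect the running-time bookkeeping, rather than the approximation argument, to be the main obstacle: the delicate points are verifying that $G_A$ is a genuine triggering model so both the RR-set identity and the identity $\E[\#\{R:v\in R\}]=\theta\sigma(\{v\})/n$ hold, and then combining the single-best-node assumption with the $M$-substituted value of $\theta$ to collapse the per-node work into the $m+\log(k\delta^{-1})\sum_v|S_v|$ form; everything else is a routine application of the {\IMM} analysis of~\cite{tang15} with the~\cite{Chen18} workaround.
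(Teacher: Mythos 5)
Your proposal is correct and follows essentially the same route as the paper: the approximation guarantee comes from the correctness of {\IMM} on the augmented triggering model $G_A$ (with $\ln\binom{n}{k}$ replaced by $M$ via the prefix-set/strategy-mix counting of adaptations (d)--(e)) combined with the two directions of Theorem~\ref{thm:VSNreduction}, exactly as in the paper's appendix. Your running-time computation, summing $\theta\cdot\Pr[v\in R]$ times the per-node expansion cost $O(\mathrm{indeg}_G(v)+|S_v|\log(k\delta^{-1}))$, is just an inline derivation of the paper's Lemma~\ref{lem:generalEPT} bound $\E[\theta]\cdot\EPT'\le \theta\, m'\,\OPT/n$ with $m'=O(m+\log(k\delta^{-1})\sum_v|S_v|)$, so the two analyses coincide.
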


The proof of the theorem follows that of~\cite{tang15}, and the novel part of the
	analysis is mainly summarized and proved in Theorem~\ref{thm:VSNreduction}.
The remaining part of the proof is given in Appendix~\ref{app:proofThm3}.
Comparing the running time result of Theorem~\ref{thm:IMMGMSVSN} with that of Theorem~\ref{thm:IMMGMS},
	we can see that the key difference is between the term $(m + \log(k\delta^{-1})\sum_{v\in V} |S_v|)$
	of {\IMMVSN} and the term $k\delta^{-1}\max_{v\in V}|S_v|(m+n)$ of {\IMMPRR}.
 {\IMMVSN} seems to have a better running time especially in avoiding an extra term of
 	$k\delta^{-1}$, which is partly because it does not require
 	maintaining partial RR sets, and partly because of the efficient LT reverse sampling method
 	via binary search.
Of course, these theoretical results are all upper bounds, so we cannot formally conclude
	the superiority of {\IMMVSN}.
We will demonstrate the superior performance of {\IMMVSN} through our empirical evaluation.
We also want to point out that {\IMMVSN} only works for the case of independent strategy activation,
	while {\IMMPRR} works for more general cases, and thus we cannot say that {\IMMVSN} can
	always replace {\IMMPRR}.

\vspace{-2mm}
\section{Experiments}
\vspace{-1mm}
\subsection{Experiment Setup}

\noindent
{\bf Datasets.\ \ }
We ran our experiments on 4 real-world networks, with statistics summarized in Table~\ref{table:stat}.
Three of them, denoted DM, NetHEPT, and DBLP, are collaboration networks:
	every node is an author and every edge means the two authors collaborated on a paper.
DM network is a network of data mining researchers
	extracted from the ArnetMiner archive (arnetminer.org) \cite{TangSWY09},
	NetHEPT is a network extracted from the high energy physics section of arxiv.org, while
	DBLP is extracted from the computer science bibliography database dblp.org~\cite{WCW12}.
Their sizes are small (679 nodes), medium (15K nodes), and large (654K) nodes, respectively.
We include the small DM dataset mainly to suit the slow Monte Carlo greedy algorithm.
The last dataset is Flixster, which is a user network of the movie rating site flixster.com.
Every node is a user and a directed edge from $u$ to $v$ means that $v$ has rated some movie(s)
that $u$ rated earlier \cite{barbieri2012topic}.
The IC model parameters of NetHEPT and DBLP are synthetically set using the weighted cascade
	method~\cite{kempe03journal}:
	edge $p(u,v) = 1/d_v$, where $d_v$ is the in-degree of node $v$.
For the DM and Flixster networks, we obtain learned edge parameters from the authors
	of~\cite{TangSWY09,barbieri2012topic} respectively.
\begin{table}[t]
	\centering
	\begin{tabular}{|c|c|c|c|}
	\hline
	Network & $n$ & $m$ & Average Degree \\ \hline \hline
	DM & 679 & 3,374 & 4.96  \\ \hline
	NetHEPT & 15,233 & 62,752 & 4.12 \\ \hline
	Flixster & 29,357 & 425,228  & 14.48  \\ \hline
	DBLP & 654,628 & 3,980,318 & 6.08  \\ \hline
	\end{tabular}
	\caption{Dataset Statistics}
	\label{table:stat}
\end{table}

\vspace{1mm}
\noindent
{\bf Application scenarios.\ \ }
We test two application scenarios of independent strategy activation explained in Section \ref{sec:application}.
The first is the {\em personalized marketing} scenario tested in~\cite{YangMPH16}.
In this scenario, each user $v$ has one unique strategy $x_v$ such as the personalized discount to
	$v$, $h_v(\vx)$ only depends on $x_v$.
We set $h_v(\vx) = 2 x_v - x_v^2$ following the same setting in~\cite{YangMPH16}.
The second one is the {\em segmented event marketing} scenario, which is not covered by previous studies.
In this case, each strategy $j$ is targeting at a disjoint subset of users $V_j$, and $x_j$ is the number
	of marketing events for user group $V_j$.
In our experiments, we set $d = 200$ for each dataset.
Moreover, we choose top $\min\{n, 2000\}$ nodes $V^*$ with the highest degrees
	from $V$.
For every node $v\in V^*$, we generate $i_v$ from $[d]$ uniformly at random
	and generate $r_{v, i_v}$ 
	from $[0, 0.3]$ uniformly at random.
For every $v\in V^*$, we set $S_v = \{i_v\}$ and $h_{v}(\vx) = 1 - (1 - r_{v, i_v})^{x_{i_v}}$;
for every  $v \in V\setminus V^*$, $S_v = \emptyset$.
This simulates the scenario where marketing efforts are focused on top connected nodes in the network.

\begin{figure*}[t]
	\centering
	\subfigure[DM] {
		\includegraphics[width=0.21\linewidth]{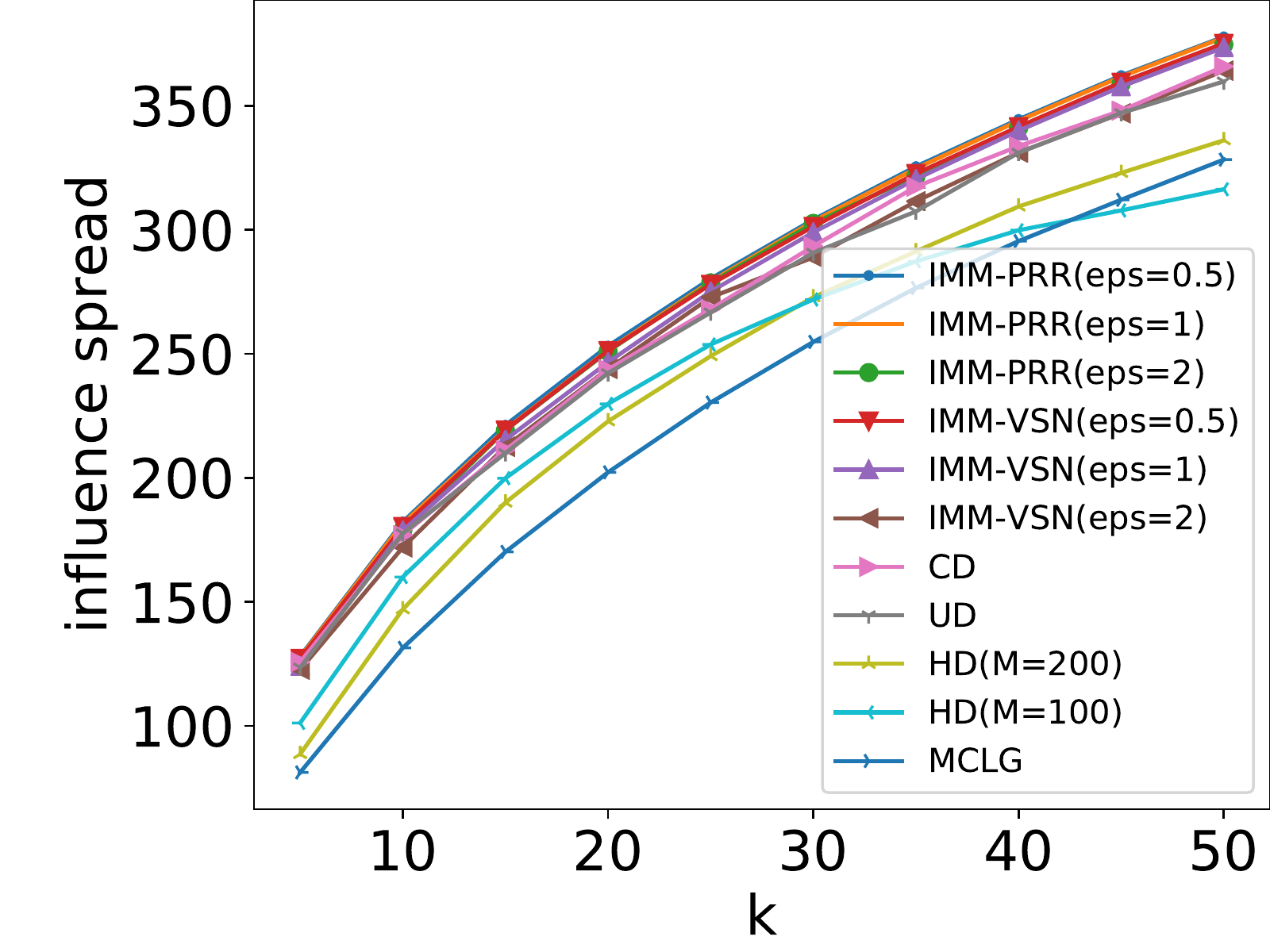}
	}
	\subfigure[NetHEPT] {
		\includegraphics[width=0.21\linewidth]{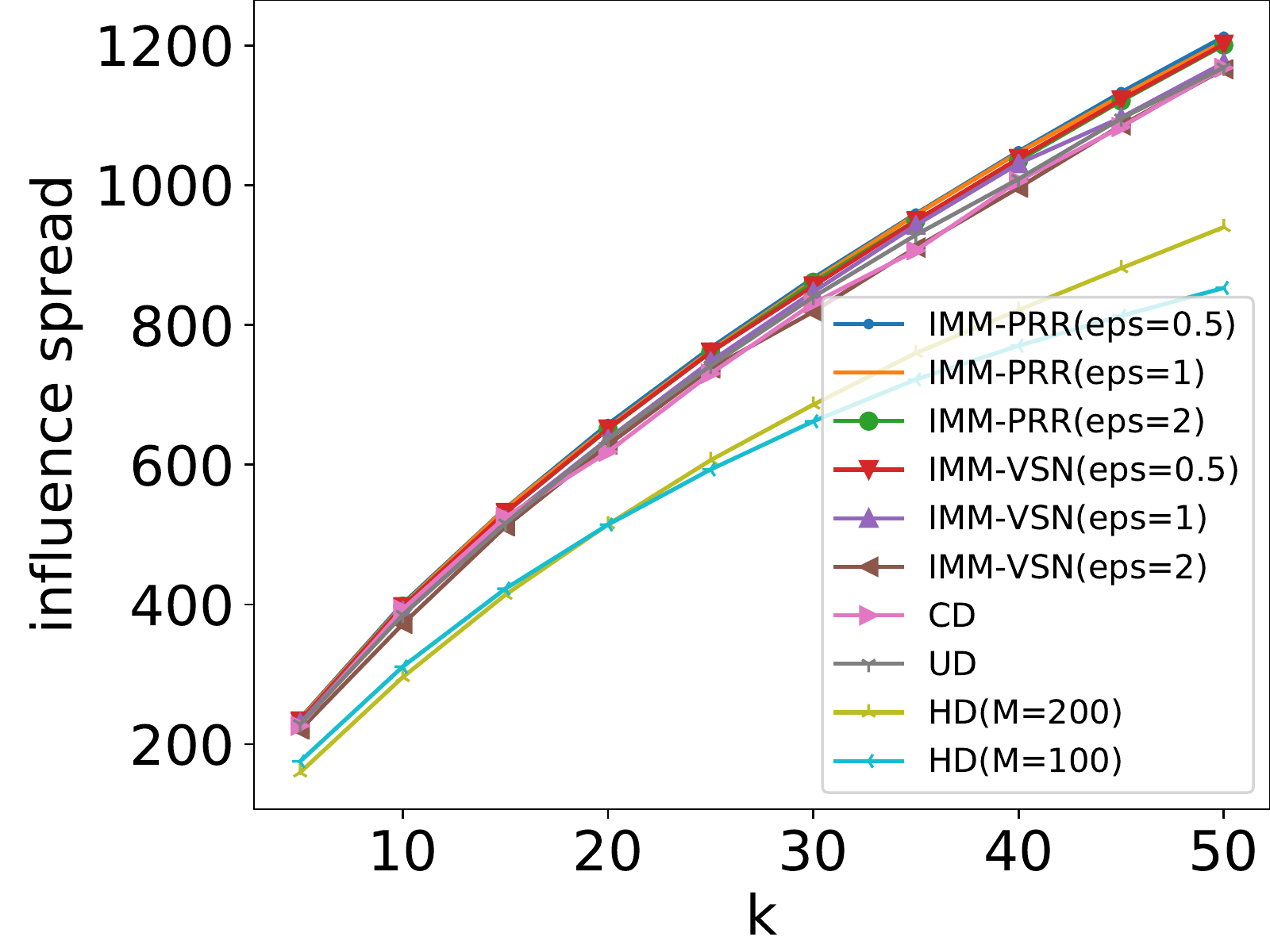}
	}
	\subfigure[Flixster] {
		\includegraphics[width=0.21\linewidth]{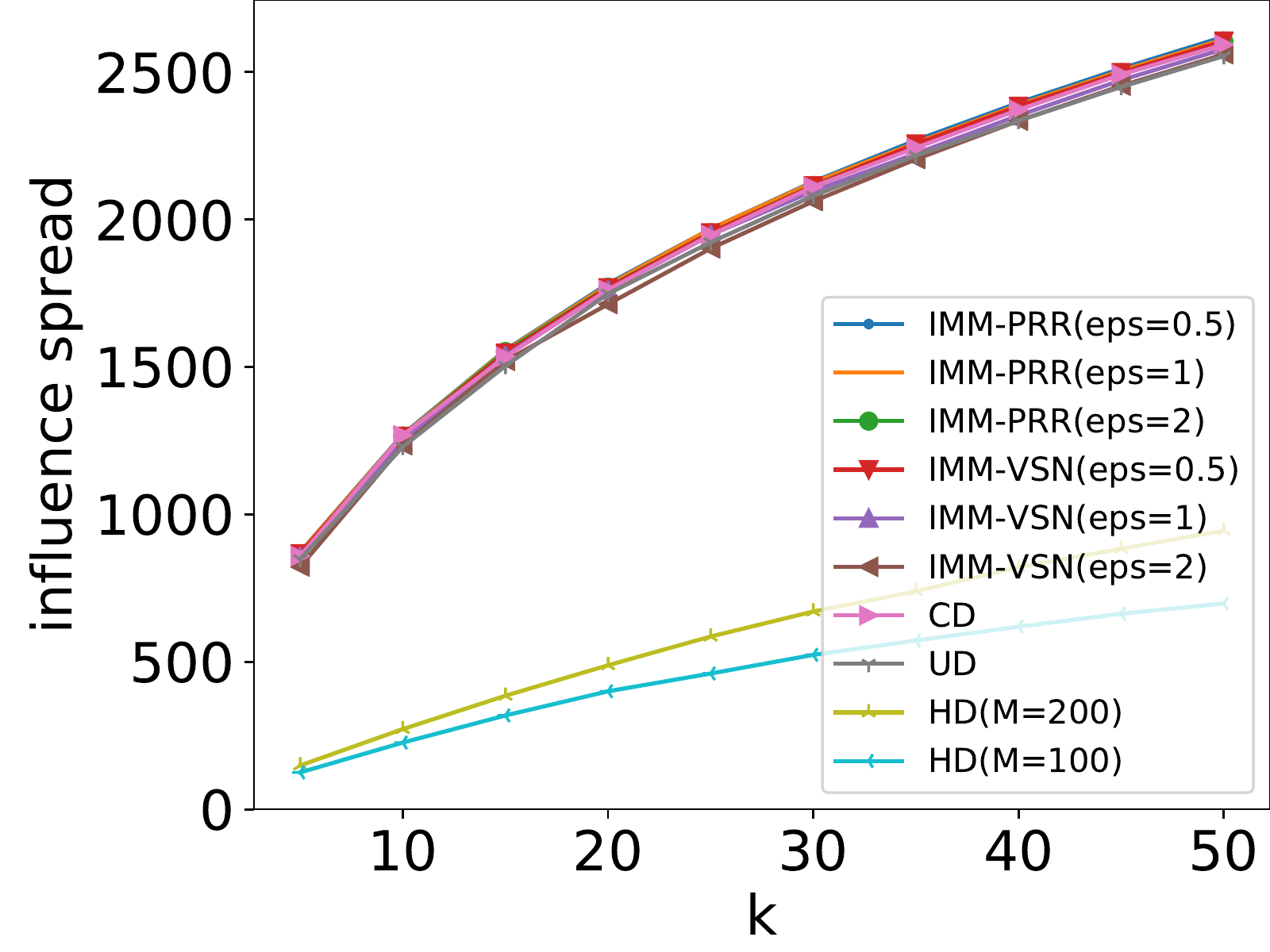}
	}
	\subfigure[DMLP] {
		\includegraphics[width=0.21\linewidth]{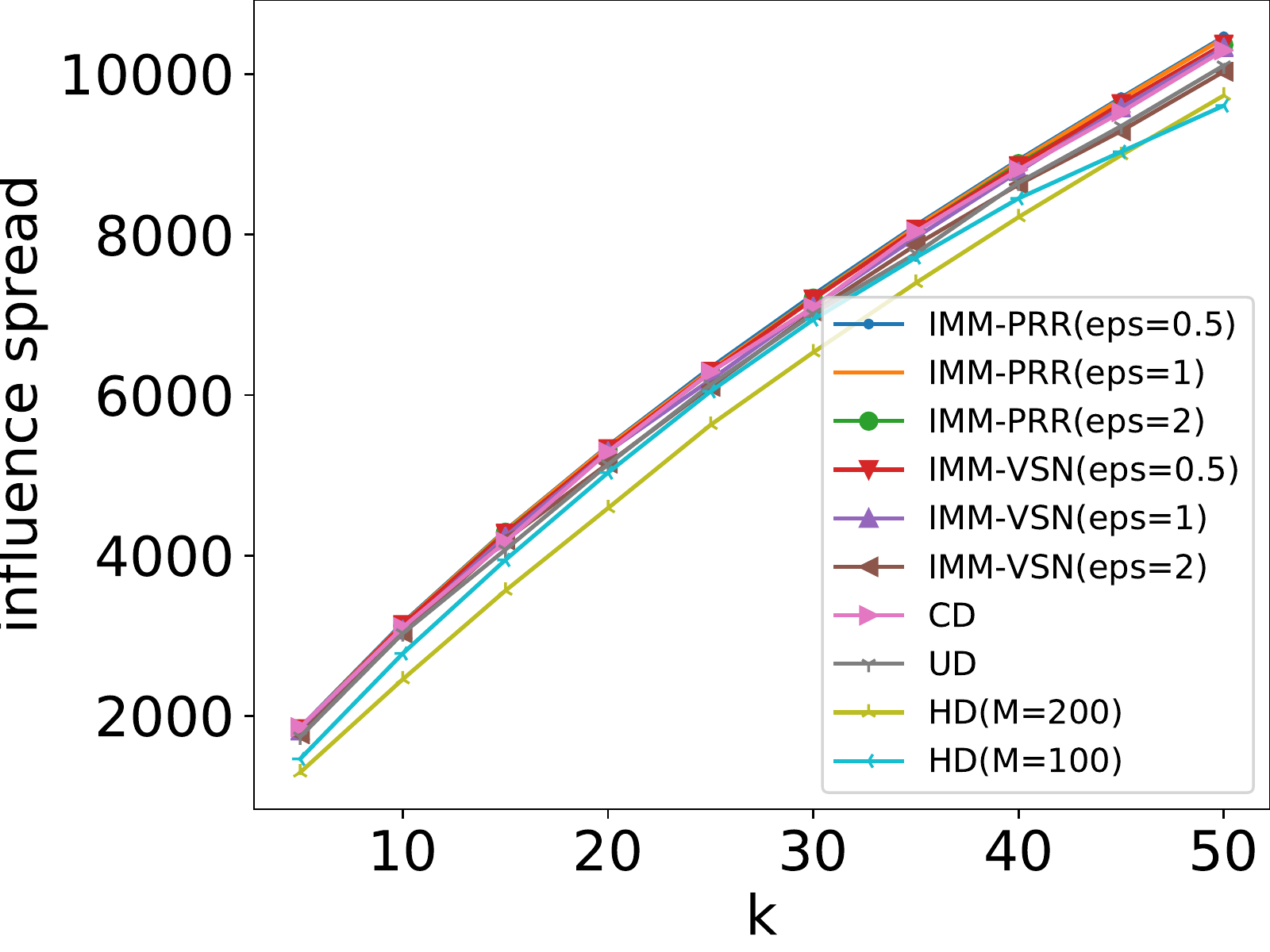}
	}
	\caption{Influence spread in personalized marketing scenario.}
	\label{fig:influence_spread_pm}
\end{figure*}
\begin{figure*}[h]
	\centering
	\subfigure[DM] {
		\includegraphics[width=0.21\linewidth]{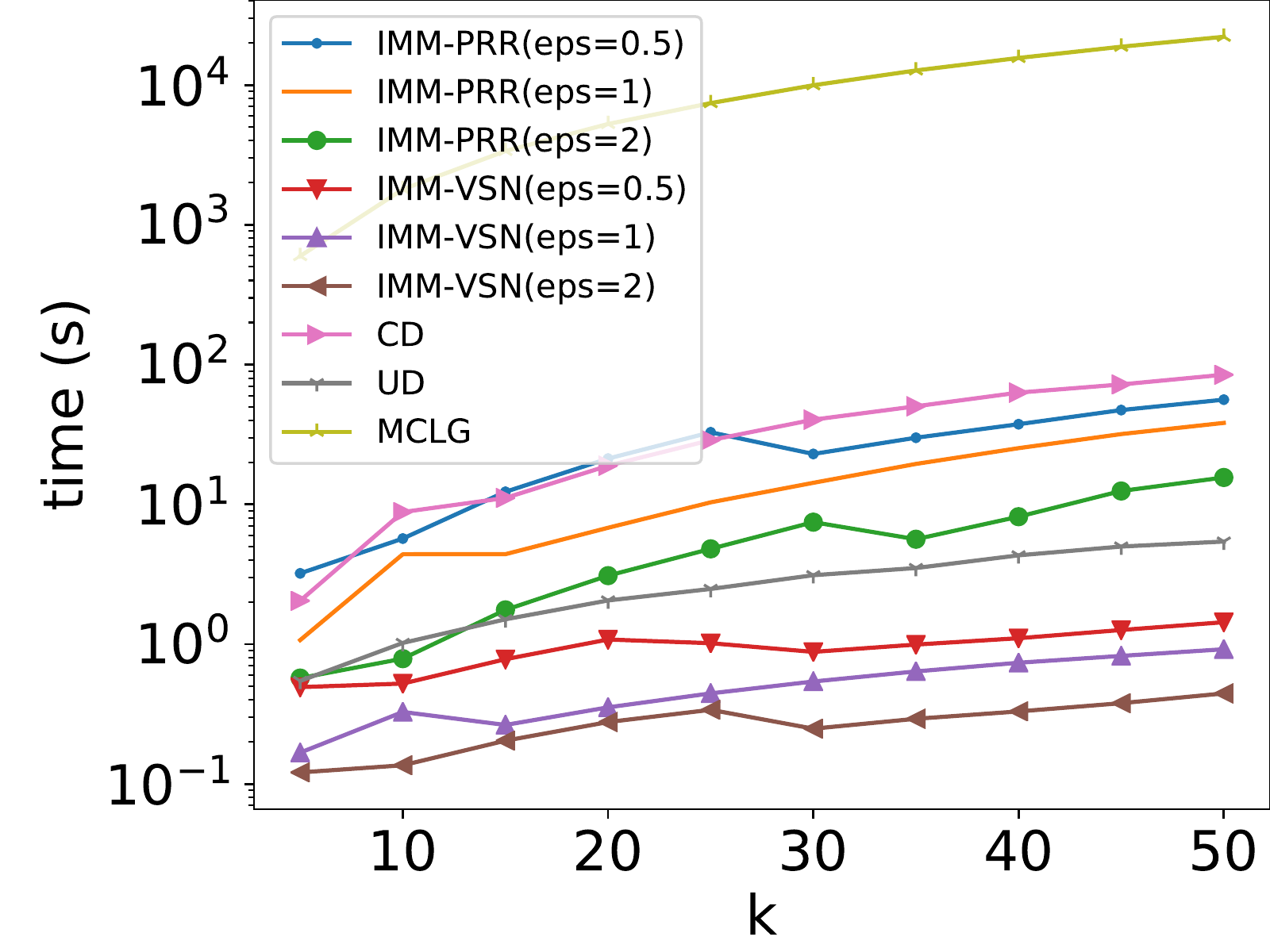}
	}
	\subfigure[NetHEPT] {
		\includegraphics[width=0.21\linewidth]{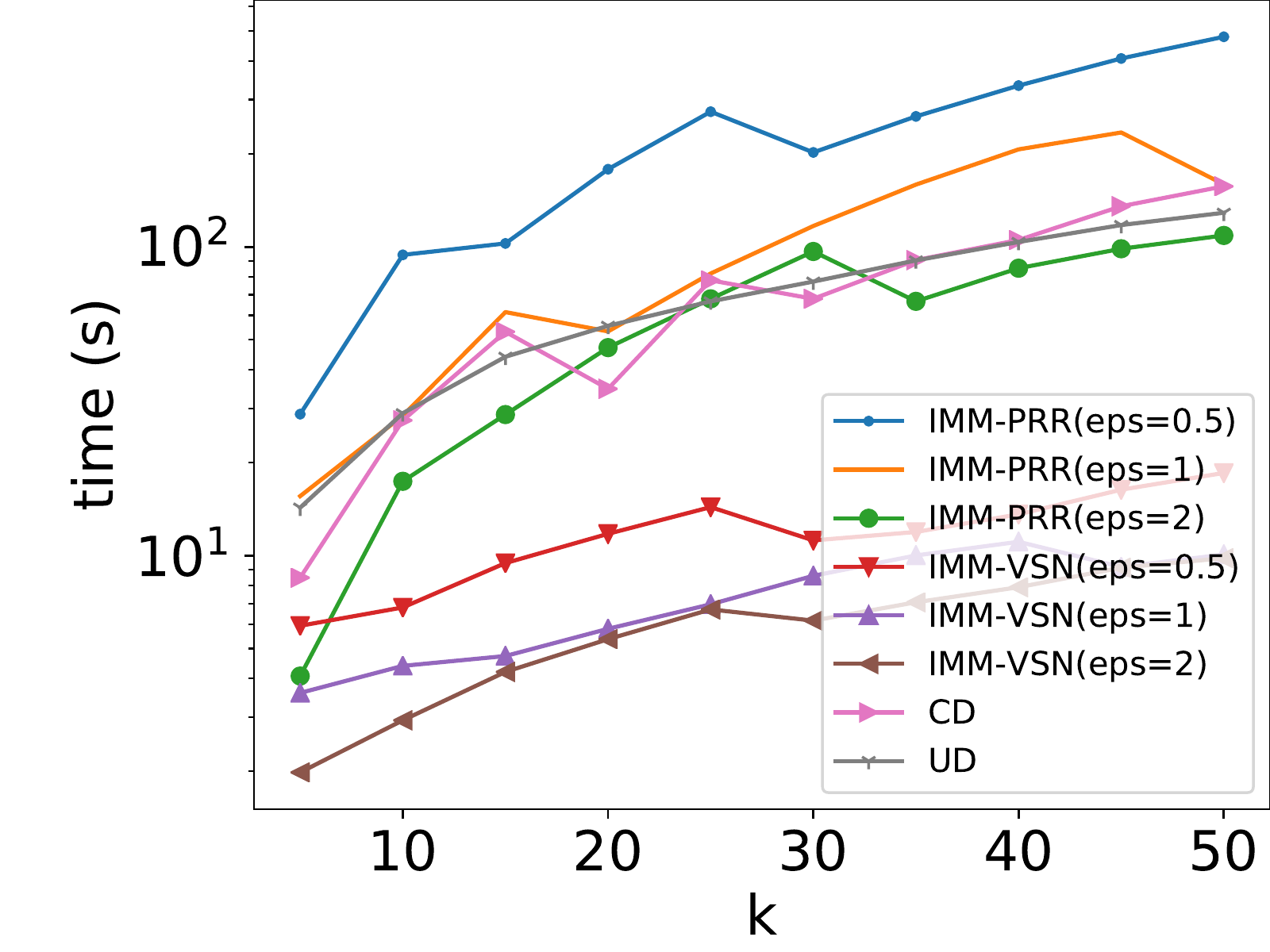}
	}
	\subfigure[Flixster] {
		\includegraphics[width=0.21\linewidth]{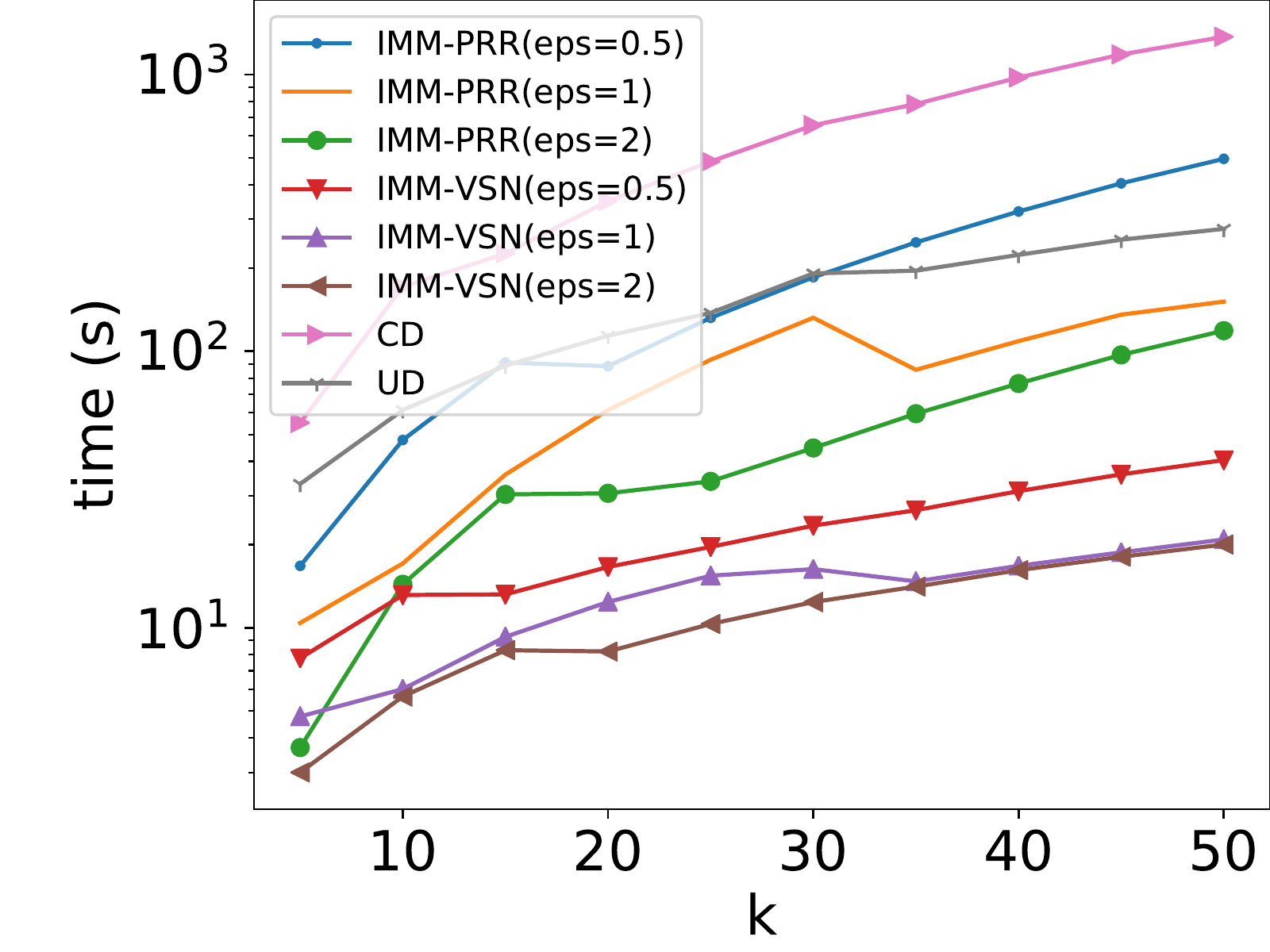}
	}
	\subfigure[DBLP] {
		\includegraphics[width=0.21\linewidth]{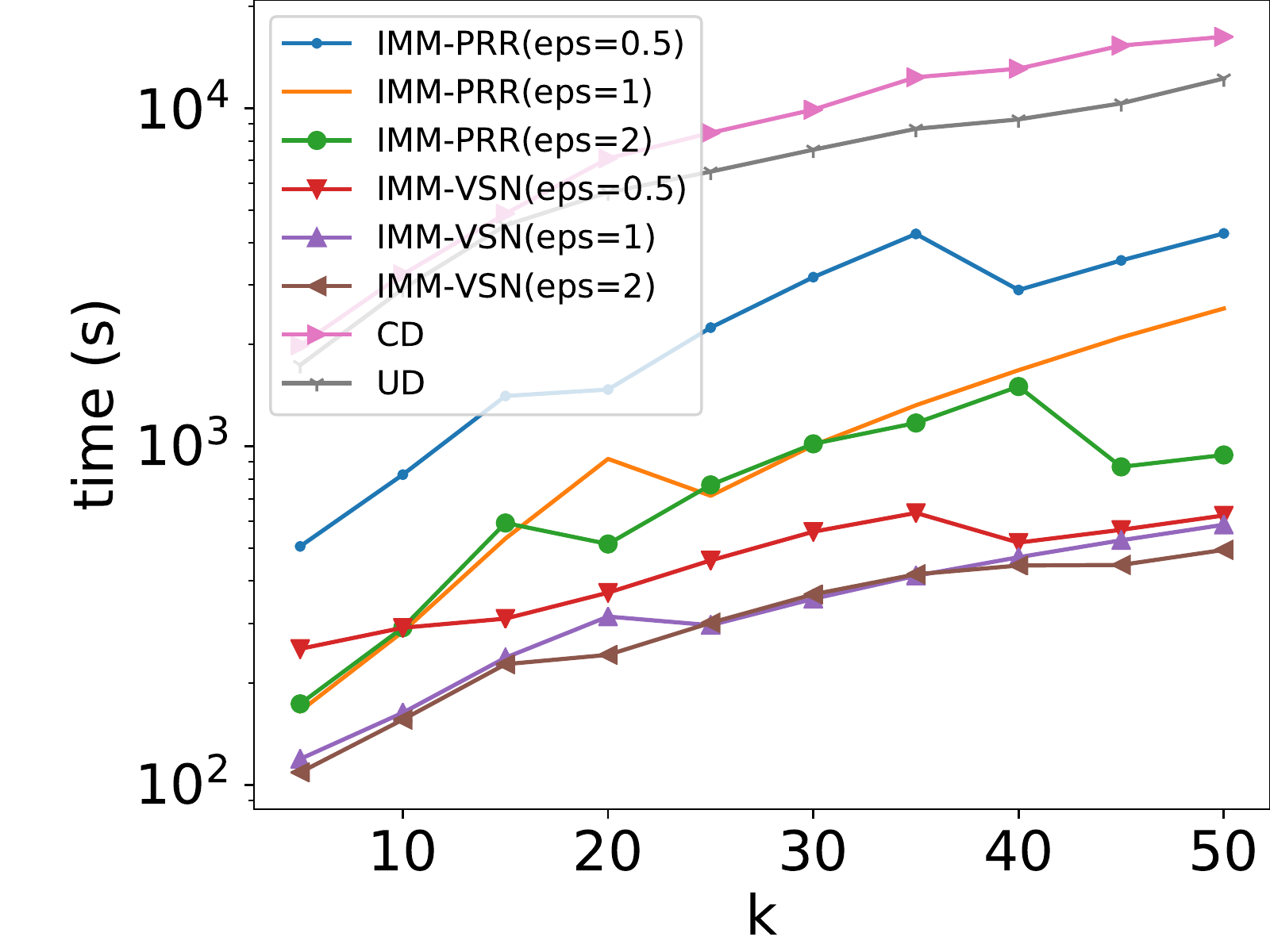}
	}
	\caption{Running time in the personalized marketing scenario.}
	\label{fig:time_pm}
\end{figure*}
\begin{figure*}[!h]
	\centering
	\subfigure[DM] {
		\includegraphics[width=0.21\linewidth]{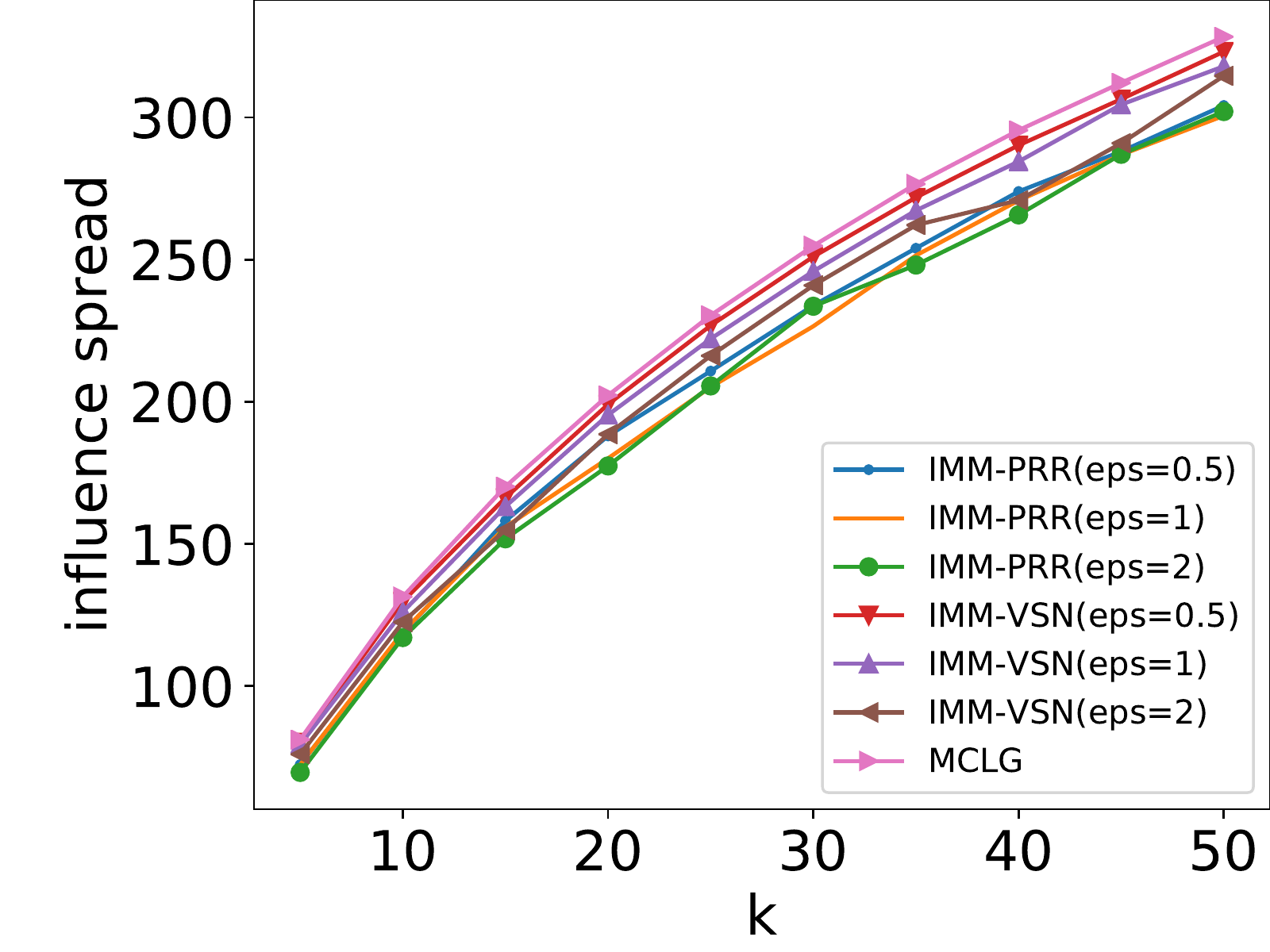}
	}
	\subfigure[NetHEPT] {
		\includegraphics[width=0.21\linewidth]{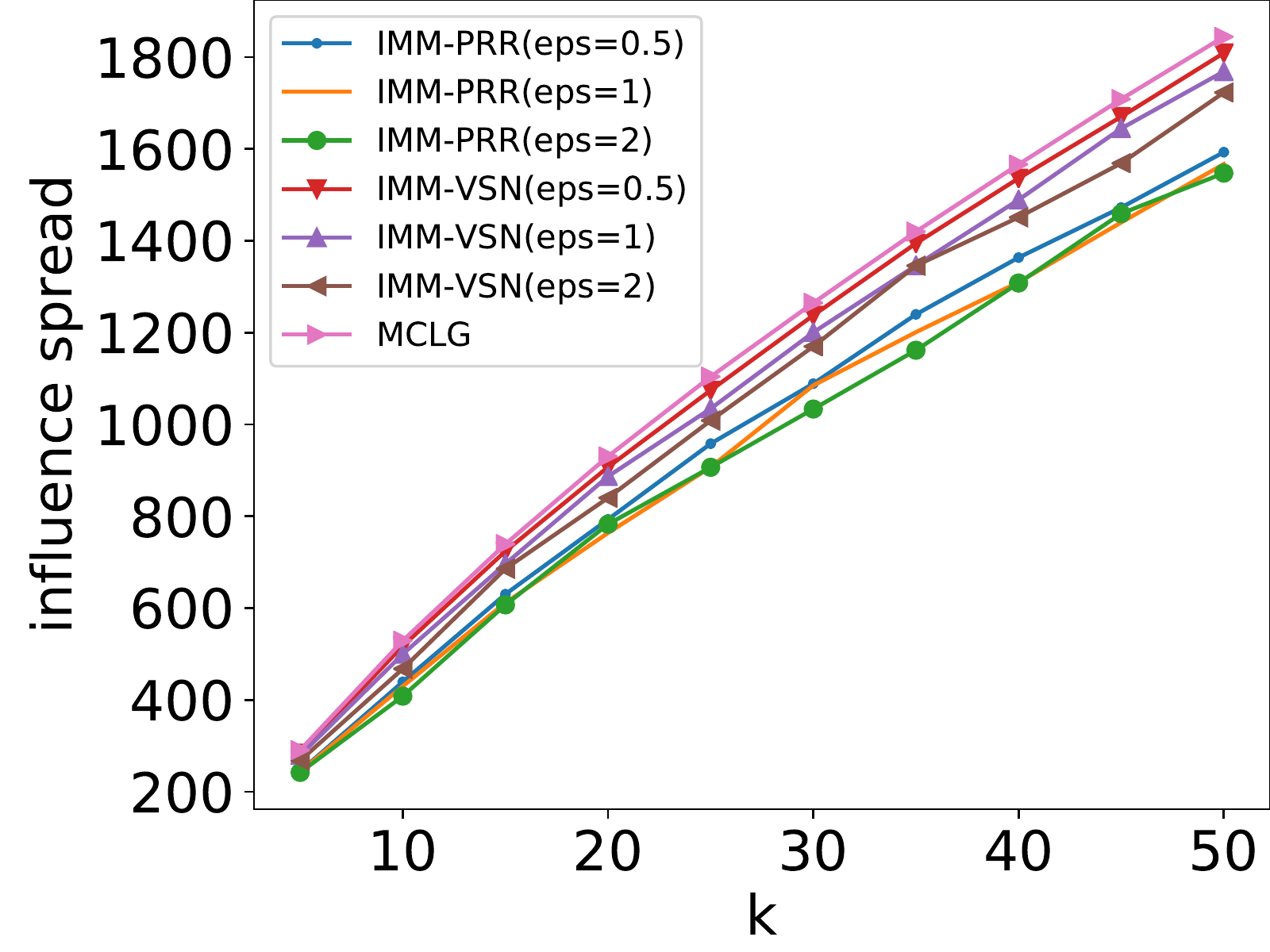}
	}
	\subfigure[Flixster] {
		\includegraphics[width=0.21\linewidth]{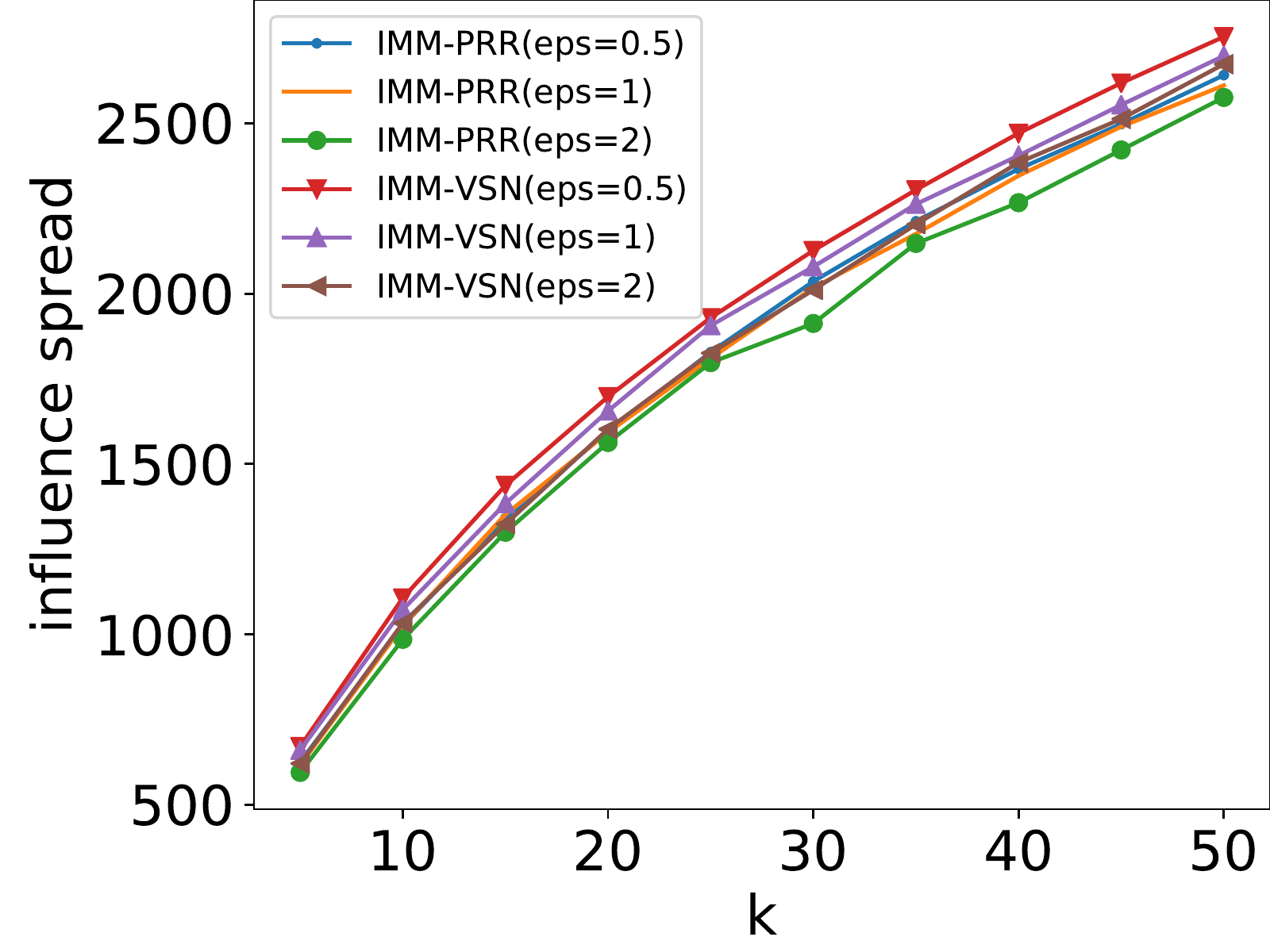}
	}
	\subfigure[DBLP] {
		\includegraphics[width=0.21\linewidth]{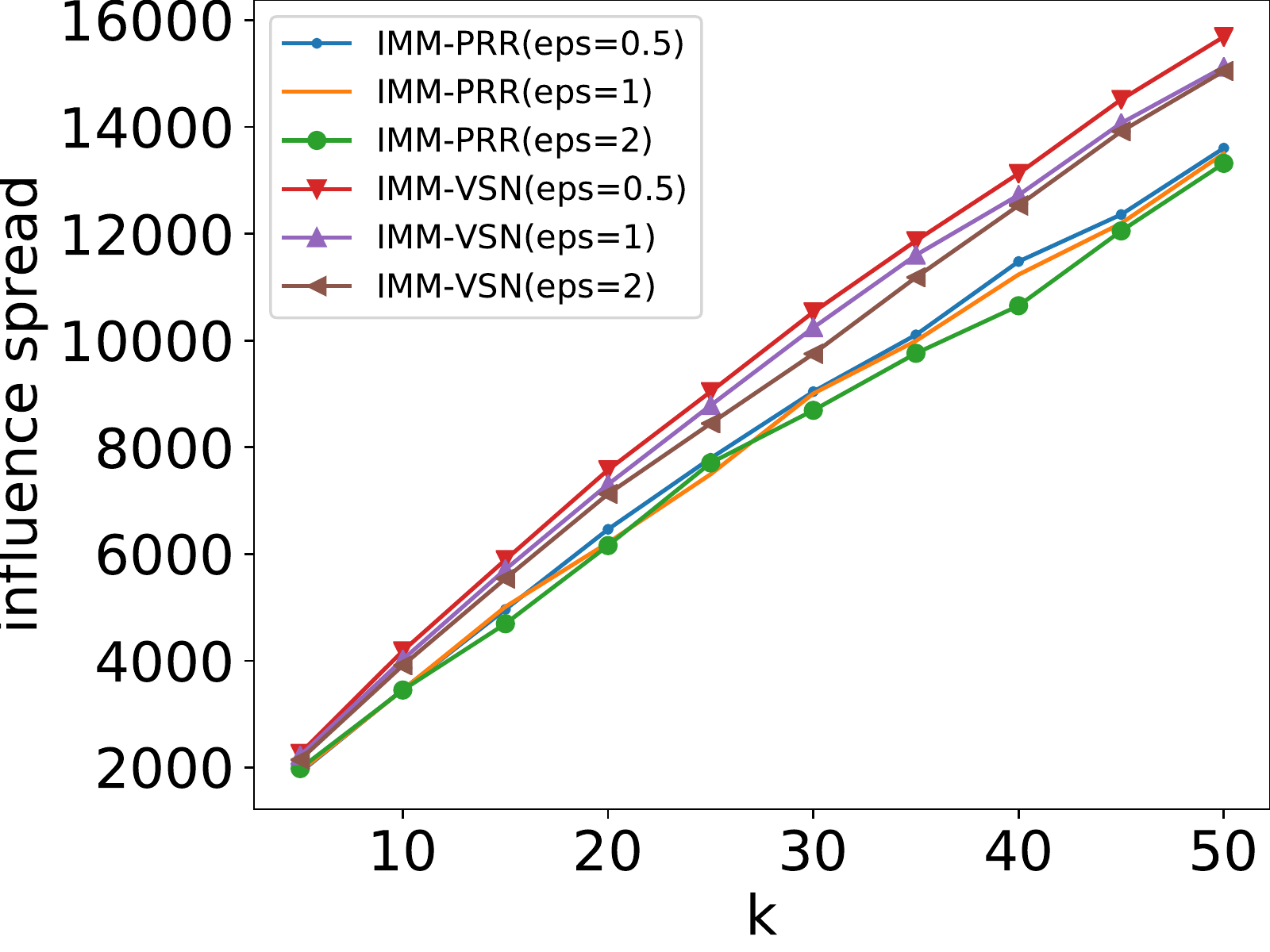}
	}
	\caption{Influence spread in the segmented event marketing scenario.}
	\label{fig:influence_spread_sem}
\end{figure*}
\begin{figure*}[!h]
	\centering
	\subfigure[DM] {
		\includegraphics[width=0.21\linewidth]{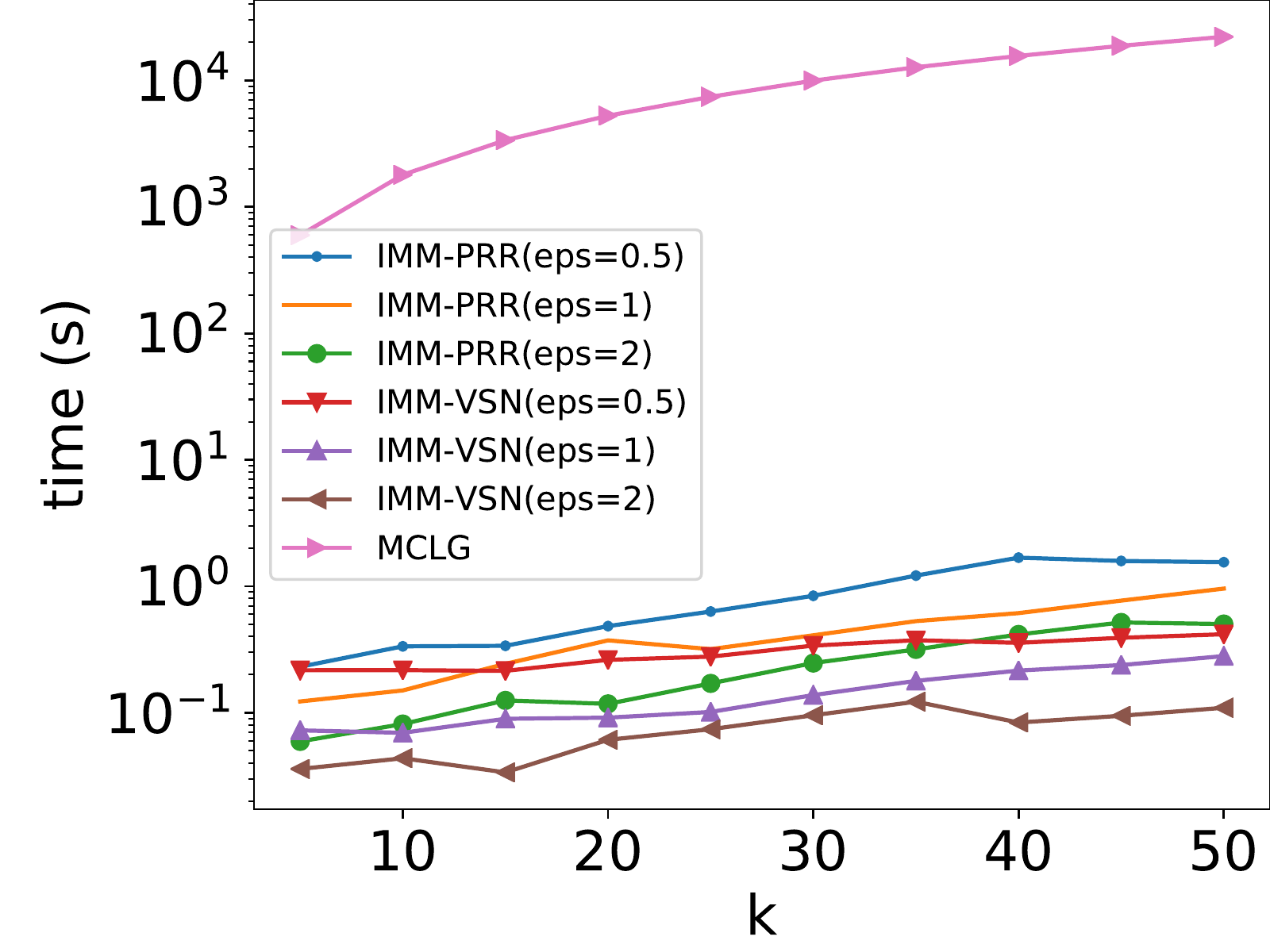}
	}
	\subfigure[NetHEPT] {
		\includegraphics[width=0.21\linewidth]{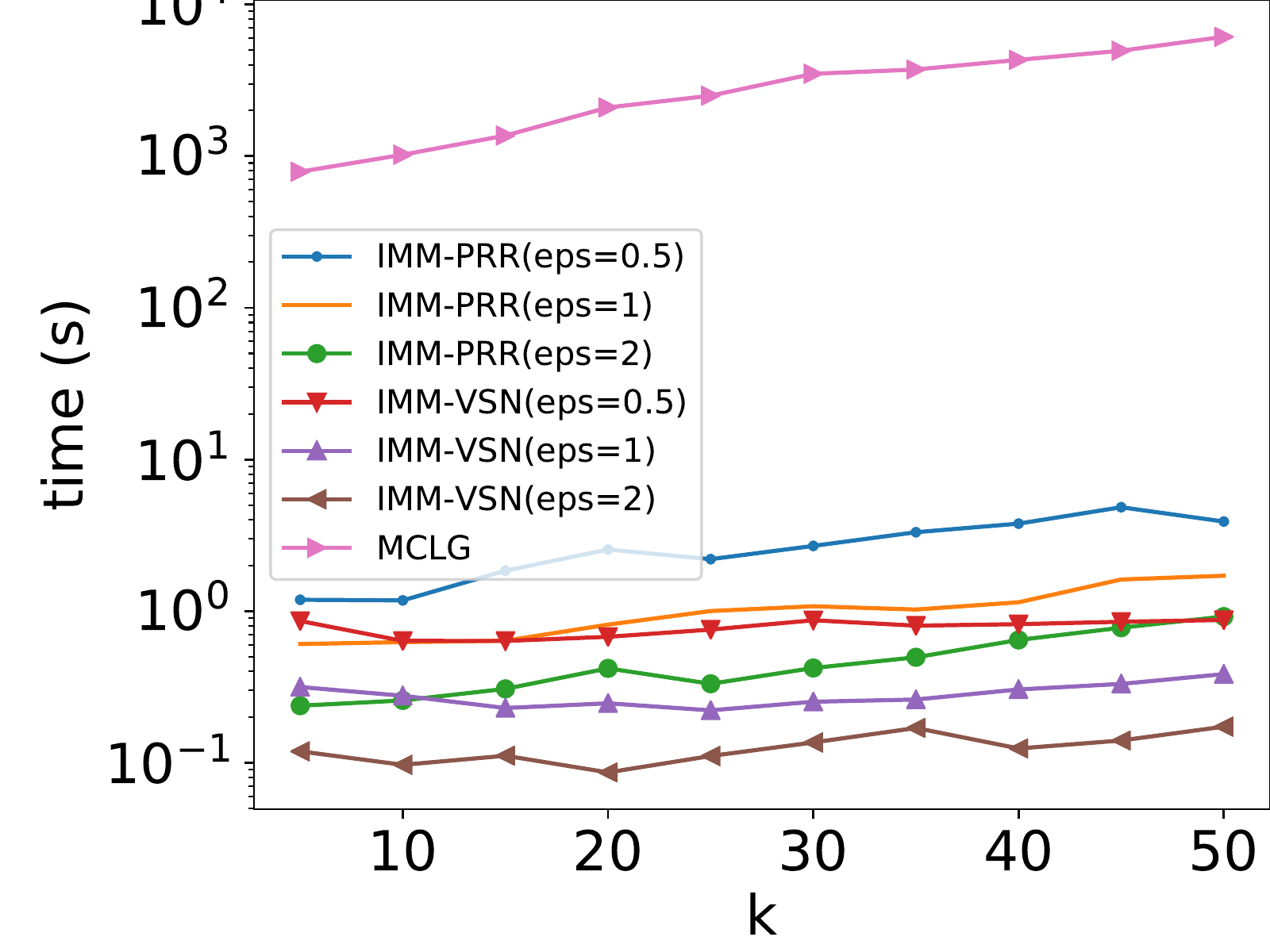}
	}
	\subfigure[Flixster] {
		\includegraphics[width=0.21\linewidth]{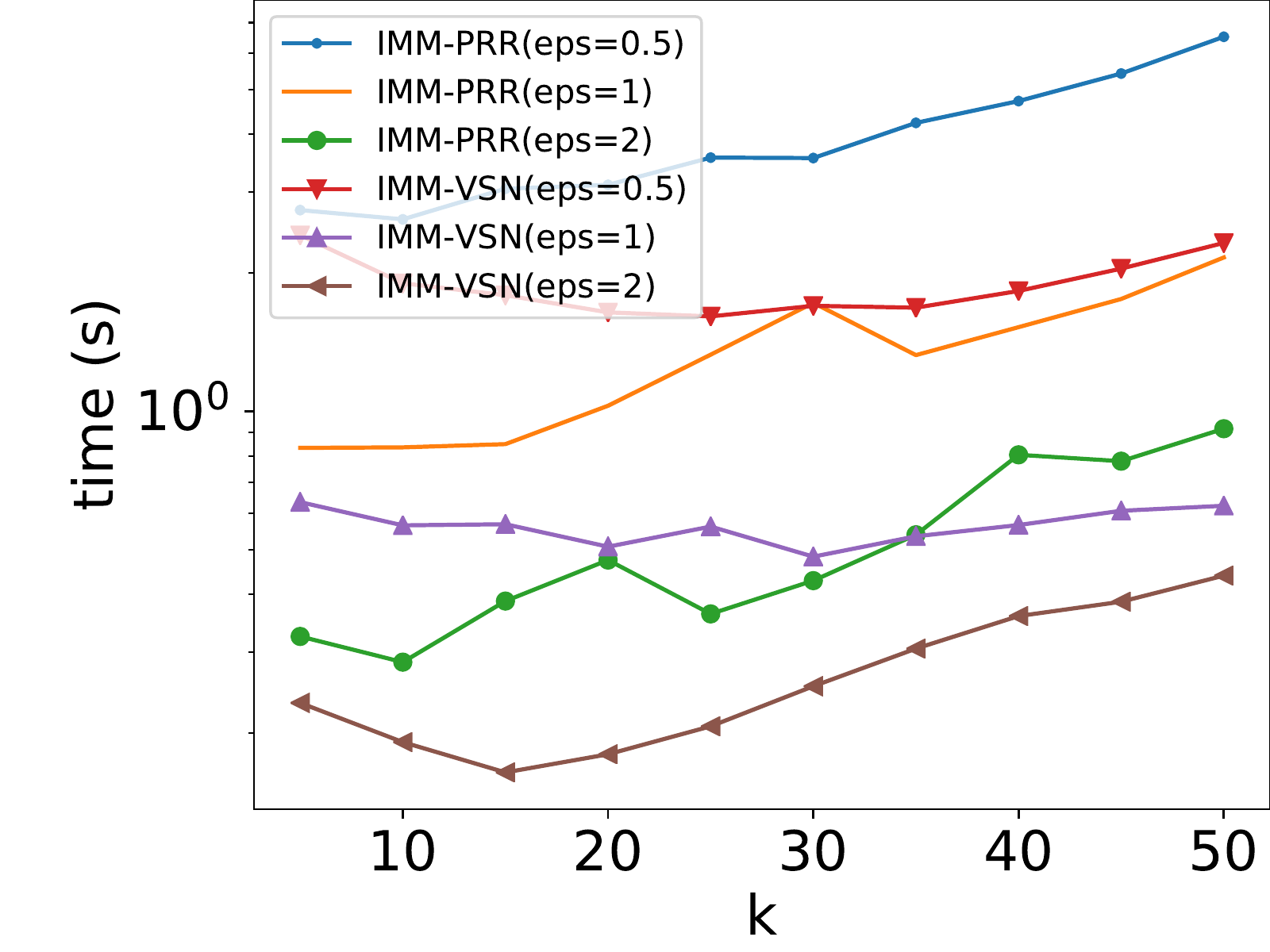}
	}
	\subfigure[DBLP] {
		\includegraphics[width=0.21\linewidth]{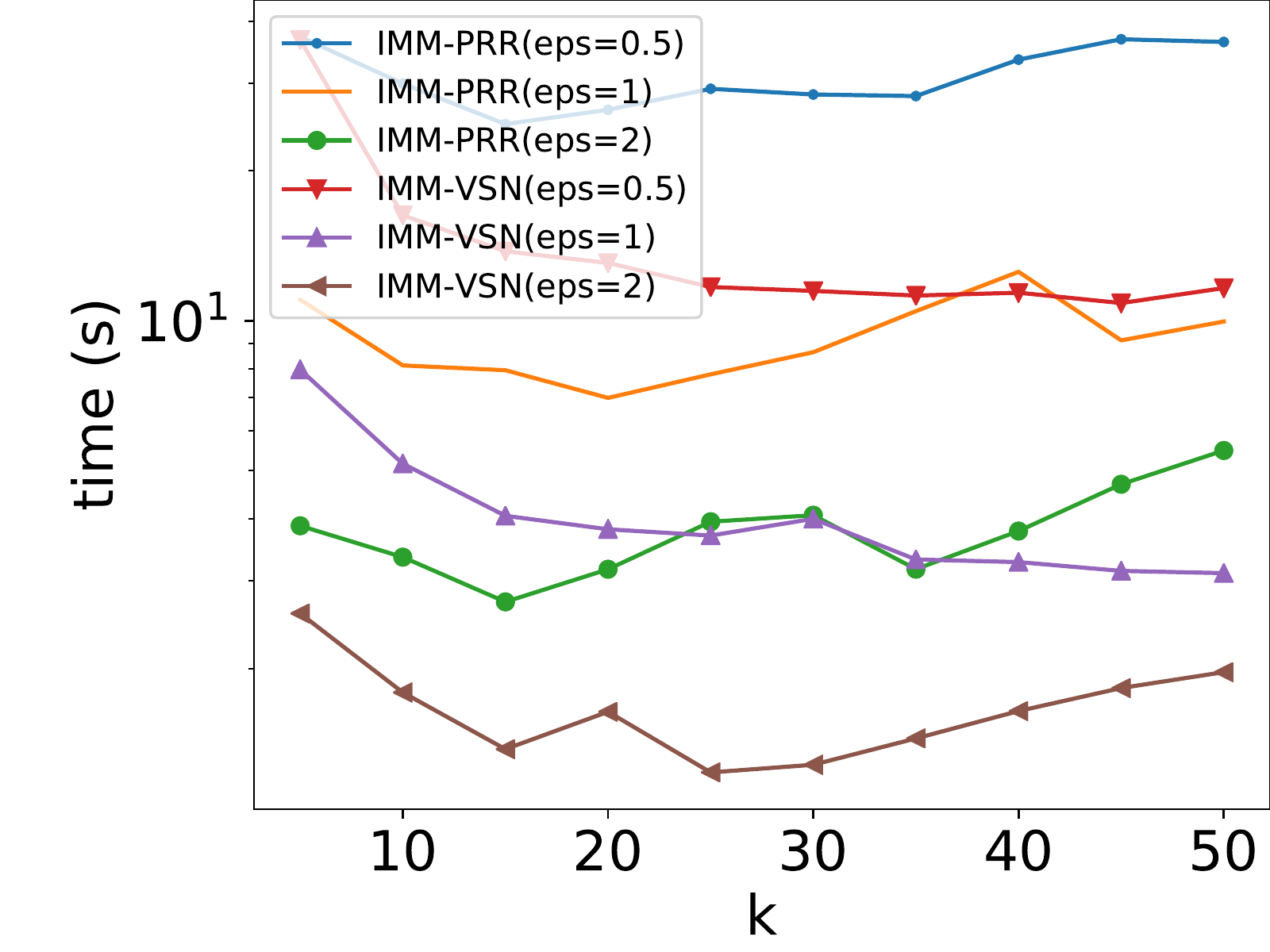}
	}
	\caption{Running time in the segmented event marketing scenario.}
	\label{fig:time_sem}
\end{figure*}

\vspace{1mm}
\noindent
{\bf Algorithms in Comparison.\ \ }
We test the following algorithms.
\begin{itemize}
\setlength{\itemsep}{0mm}
\item {\IMMPRR}/{\IMMVSN}. For both algorithms, we set $\ell = 1$, $\varepsilon = 0.5, 1, 2$.
When $\varepsilon = 1$ or $2$, {\IMMPRR}/{\IMMVSN} no longer has the approximation guarantee, but
	it is still a valid heuristic algorithm, since all other baselines are heuristic algorithms.

\item \UD. {\UD} is proposed in \cite{YangMPH16} for \textit{personalized marketing}. For each discount
	$c\in \{0.1, 0.2, \ldots, 1\}$, it will return a vector $\vx$ s.t. $x_i = 0$ or $x_i = c$ ($i\in [d]$). Then they run an exhaustive search of $c$ to find a best $c$.
\item {\CD}. {\CD} is also proposed in \cite{YangMPH16}. {\CD} uses the output of {\UD} as the initial
	value and runs a coordinate decent algorithm to achieve better result.

\item {\HD}. {\HD} is a heuristic baseline, where
	we choose top $M$ nodes with the highest degrees from $V$
	and then distribute the budget to those $M$ nodes proportional to their degrees.
	We set $M = 100$ and $M = 200$ in our experiments.
\item {\MCLG}. This is {\LGreedy} (Algorithm \ref{alg:hillclimb}) with Monte Carlo simulations to estimate
	influence spread $g(\vx)$. We use $10,000$ simulations for each estimation of $g(\vx)$.
\end{itemize}
For the personalized marketing scenario, we test all algorithms with granularity $\delta=0.1$.
For the segmented event marketing scenario, we do not test {\UD}, {\CD}, and {\HD}, since they are all
	designed for the personalized marketing scenarios.
In this case, $\delta=1$ as required by the scenario.
For all cases, we test total budget $k$ from $5$ to $50$.
We do not include the original influence maximization algorithm {\IMM} for seed set optimization
	in our tests, because \cite{YangMPH16} already demonstrates that
	the original {\IMM} is  inferior to {\UD} and {\CD} in influence spread.
%
%

All our tests are run on a Ubuntu 14.04.5 LTS server with 3.3GHz and 125GB memory. All algorithms are coded in C++ and
	compiled by g++.
All results on influence spread are the average of 10000 simulation runs for any given seed set, and all results on running time are the average of
	five algorithm runs.

\vspace{-2mm}
\subsection{Experimental Results}
We first look at the results for personalized marketing.
Figure~\ref{fig:influence_spread_pm} shows the influence spread result and Figure~\ref{fig:time_pm} shows the running
	time result.
First comparing between our two algorithms {\IMMPRR} and {\IMMVSN}, they produce about the same 
	influence spread but {\IMMVSN} typically runs much faster than {\IMMPRR}, in many cases close
	to or more than one-order of magnitude for the same parameter setting.
This demonstrates that the virtual strategy node approach indeed runs faster, matching our
	theoretical analysis.
Moreover, changing $\varepsilon$ from $0.5$ to $2$ significantly improves the running time with
	very slight or no penalty on influence spread.
	
When comparing to {\MCLG} algorithm (only run on DM),
	our {\IMMPRR}/{\IMMVSN} algorithms show clear advantage: its running time is two to four orders of magnitude faster than
	{\MCLG} while their influence spreads are also better than {\MCLG}.

When comparing to {\UD} and {\CD} heuristics,
	our {\IMMPRR}/{\IMMVSN} algorithms consistently perform
	better than {\UD} and {\CD} in influence spread.
For running time,  {\IMMVSN} runs much faster than {\UD} and {\CD} by one or two orders of magnitude,
	and {\IMMPRR} with $\varepsilon = 2$ is also faster than {\UD} and {\CD} (except on DM).
This again demonstrates the scalable design of our approach, in particular our algorithm
	{\IMMVSN} with $\varepsilon = 0.5$ can provide both theoretical guarantee and superior empirical performance in both influence spread and running time, 
	while neither {\UD} or {\CD} provides any theoretical guarantee.

For the baseline heuristic {\HD}, the result shows that its influence spread is significantly lower than others (especially in NetHEPT and Flixster), and thus it is
	not a competitive heuristic, even though it is very simple and fast.

The results on segmented event marketing are shown in Figures~\ref{fig:influence_spread_sem} and~\ref{fig:time_sem}.
{\MCLG} is too slow so is only run on the smaller DM and NetHEPT datasets.
Overall the results are consistent with the results for personalized marketing.
{\IMMVSN} typically runs much faster than {\IMMPRR}, and
	it runs 4-5 orders of magnitude faster than {\MCLG}.
Increasing $\varepsilon$ also significantly improve running time, with only slight decrease
	in influence spread.
In terms of influence spread, {\IMMVSN} with $\varepsilon = 0.5$ has the best influence spread among
	different settings for {\IMMPRR}/{\IMMVSN}, and 
	is only slightly lower than the influence spread achieved by {\MCLG}.

From these experiments, we can conclude that for the large class of independent strategy
	activation scenarios, {\IMMVSN} is the best choice that provides both theoretical guarantee
	and fast running time, and it outperforms the Monte Carlo greedy algorithm by several orders of
	magnitude, and is also significantly faster than other competing heuristic algorithms.
Moreover, our algorithms allow the easy tuning of parameter $\varepsilon$ to significantly improve
	running time with small or no penalty on influence spread.

\section{LIM with Partitioned Budgets}

In this section, we further generalize the LIM problem with partitioned budgets.
More specifically, marketing strategies often belong to multiple categories, and each category may be
assigned a separate budget.
Formally, the strategy set $[d]$ is partitioned into $\lambda$ categories $C_1, \ldots, C_\lambda$, and each
category $C_j$ has a budget $k_j$, i.e. $\sum_{i\in C_j} x_i \le k_j$.
For convenience, we use $\vx_C$ to denote the projection of vector $\vx$ into index set $C$.
Then the above constraint is $|\vx_{C_j}| \le k_j$. 
The partitioned budget problem is formally defined below.

\begin{definition}[Lattice Influence Maximization with Partitioned Budgets]
	\label{def:LIM-PB}
	Given the same input as in the LIM problem (Definition~\ref{def:LIM}), except that
	total budget $k$ is replaced by partitions $\{C_j\}_{j\in [\lambda]}$ and partitioned budgets
	$\{k_j\}_{j\in [\lambda]}$, 
	the task of lattice influence maximization with partitioned budgets, 
	denoted as LIM-PB, is to find an optimal strategy mix $\vx^*$ that achieves
	the largest influence spread within the partitioned budget constraints, that is
	\[
	\vx^* = \argmax_{\vx\in \cX, |\vx_{C_j}|\le k_j, \forall j\in [\lambda] } g(\vx).
	\]
\end{definition}

Note that since the per-strategy constraint $x_i \le b_i$ for the original LIM problem does not
	change the problem, our partitioned constraint here means that
	 $|C_j| > 1$ for all $j\in [d]$.
%
%

We next explain how to solve the partitioned budget constraint version LIM-PB.
Our method relies on the submodular maximization problem under the general matroid constraint.
A matroid on a set of elements $U$ is a collection of subsets of $U$ called independent sets, which satisfy
the following two properties: (a) If $I\subseteq U$ is an independent set, then every subset
of $I$ is also an independent set; and (b) If $I, I'$ are two independent sets with $|I| < |I'|$,
then there must be some element $ e\in I' \setminus I$ such that $I\cup \{e\}$ is also an
independent set.
The simplest matroid is the uniform matroid, where for some parameter $k$ all subsets $I$ with
$|I| \le k$ is an independent set.
Classical influence maximization essentially uses the uniform matroid constraint.
A partition matroid is such that, for a certain partition of $U$ into disjoint sets 
$A_1, \ldots, A_\lambda$,  and for parameters $k_1, \ldots, k_\lambda$, 
all subsets $I \subseteq U$ satisfying $ |I\cap A_i| \le k_i$ for all $i\in [\lambda]$ are
independent sets.
The classical result by \cite{fisher1978analysis} shows that the greedy algorithm
on a general matroid could achieve $1/2$ approximation ratio for nonnegative monotone and
submodular set functions.

Through Lemmas~\ref{lem:htogsubmodular} we already know that our objective functions $g(\vx)$
and $\eg_\cR(\vx)$ are nonnegative, monotone, and DR-submodular, but they are vector functions.
We now show how to translate them into equivalent set functions and then show that 
the LIM-PB problem corresponds to a partitioned matroid constraint under the set representation.
Let $b \ge \sum_{j\in[\lambda]}k_j \cdot \delta^{-1}$ be a large enough integer.
Construct the set of elements $U = \{(j, s)\mid j\in [d], s\in [b] \}$.
For any subset $A \subseteq U$, denote $A^{(j)} = A \cap \{(j,s)\mid s\in [b] \}$.
We map $A$ into a vector $\vx^A = (x^A_1, \ldots, x^A_d)$ such that $x^A_j = |A^{(j)}|\cdot \delta$.
Conversely, for every vector $\vx \in \cX$ satisfying the partitioned budget constraint,
we map $\vx$ to a set $A^{\vx} = \{ (j,s)\mid j\in [d], s \cdot \delta \le x_j  \}$.
For every vector function $f$, we define a set function $f^U$ on $U$ to be
$f^U(A) = f(\vx^A)$, for all $A \subseteq U$.
It is easy to see that the marginal $f^U(A\cup \{(j,s)\}) - f^U(A) = 
f(\vx^A + \delta\ve_j) - f(\vx^A)$.
Thus, one can verify that if $f$ is monotone and DR-submodular, then $f^U$ is monotone and submodular.
Next, for the partitioned budget constraint  $|\vx_{C_i}| \le k_i$ given partition
$C_1, \ldots, C_\lambda$ of $[d]$ and budgets $k_1, \ldots, k_\lambda$,
it is equivalent to partition $U$ to $U_1, \ldots, U_\lambda$, with $U_i = C_i \times [b]$,
and enforce constraint $|A \cap U_i| \le k_i \cdot \delta^{-1}$ for all $A\subseteq U$ and
$i \in [\lambda]$. 
Therefore, we translate the LIM-PB problem of maximizing $g(\vx)$ with the partitioned budget
constraint to maximizing $g^U(A)$ under the partition matroid constraint.
Similarly we can translate $\eg(\vx)$ to $\eg^U(A)$.
Therefore, we can conclude that the greedy algorithms under the partitioned budget
constraint could achieve $1/2$ approximation.

The actual greedy algorithm is straightforward.
In $\IMMPRR$, in every greedy step when we need to find
	another increment in one of the strategies  
	(line~\ref{line:argmaxgreedy} of Algorithm~\ref{alg:hillclimb} or 
	line~\ref{line:largestj} of Algorithm~\ref{alg:hill_climbing_delta}), 
	instead of taking $\argmax$ among all possible $j\in [d]$, we only search for $j$
	such that $\vx + \delta \ve_j$ still satisfies the partitioned budget constraint.
Similarly, in $\IMMVSN$, when we need to find another virtual strategy node as a seed, 
	we need to only search for those seeds that would satisfy the partitioned budget constraint.
The greedy steps terminates until the partitioned budgets are exhausted.
The corresponding algorithms achieves $1/2-\varepsilon$ approximation ratio with probability at
	least $1 - 1/n^\ell$, and runs in the same expected running time as in their non-partitioned versions.

\vspace{-1mm}
\section{Conclusion and Future Work}

We design two RIS-based scalable algorithms, {\IMMPRR} based on partial RR sets and {\IMMVSN} based on virtual
	strategy nodes, that guarantee $1-1/e-\varepsilon$ approximation to the lattice influence maximization
	problem.
{\IMMPRR} could solve the general LIM problem, while {\IMMVSN} has better running time for the case
	of independent strategy activations, as demonstrated both empirically and through theoretical
	analysis.

There are several future directions to this study.
One direction is to study continuous domain, and investigate if RIS-based approach can be adapted
	to the continuous domain.
Another direction is to study how to apply gradient methods for continuous influence maximization.
It may also be interesting to study lattice or continuous influence maximization
	in other influence propagation settings
	such as competitive influence maximization.

\bibliographystyle{splncs03}
\bibliography{bibdatabase}

\clearpage

\appendix

\section*{Appendix}

\section{Remaining Part of the Proof of Theorem~\ref{thm:IMMGMS}} \label{app:proofThm1}

%

The remaining part of the proof of Theorem~\ref{thm:IMMGMS} is directly modified from the proof of Theorem 4 in \cite{tang15} together
	with the fix in~\cite{Chen18}.
\begin{lemma}
\label{lem:hill}
Given $k, \delta, d, n, \ell$, Algorithm~\ref{alg:hill_climbing_delta} returns  a $(1 - 1/e - \varepsilon)$-approximation with at least $1 - 1 / n^\ell$ probability if $\theta$, the size of $\cR$, is at least $\lambda^*(\ell)/ \OPT$, where $\lambda^*(\ell)$ is defined in Eq.\eqref{eq:lambda*}. 
\end{lemma}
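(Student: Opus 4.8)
The plan is to follow the martingale-based analysis of {\IMM} from~\cite{tang15} (with the fix of~\cite{Chen18}), substituting the partial-coverage estimator $\eg_\cR$ of Eq.~\eqref{eq:hatgdef} for the classical RR-set coverage count, and replacing $\ln\binom{n}{k}$ by $M = \min(k\delta^{-1}\ln d, d\ln(k\delta^{-1}))$ as the bound on the number of candidate solutions. The two facts that make this substitution go through are already established: first, Lemma~\ref{lem:RRSetGMS} (together with Eq.~\eqref{eq:hatgdef}) says $\eg_\cR(\vx)$ is an unbiased estimator of $g(\vx)$ whose value is an average of $\theta$ i.i.d.\ bounded random variables $n\cdot(1-\prod_{v\in R}(1-h_v(\vx)))\in[0,n]$; second, Lemma~\ref{lem:htogsubmodular} says $\eg_\cR$ is nonnegative, monotone and DR-submodular, so that $\LGreedyDelta$ (equivalently $\LGreedy$ applied to $\eg_\cR$) returns a vector $\evx^o$ with $\eg_\cR(\evx^o)\ge (1-1/e)\max_{\vx\in\cX,|\vx|\le k}\eg_\cR(\vx)$.

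The concentration step is the heart of the argument. I would carry it out in two parts, exactly as in~\cite{tang15}. (i) \emph{Upper-bounding deviation at the greedy output.} Condition on any fixed candidate vector $\vx$; since $\theta\cdot\eg_\cR(\vx)/n$ is a sum of $\theta$ i.i.d.\ random variables in $[0,1]$ with mean $g(\vx)/n$, a multiplicative Chernoff/Hoeffding bound controls $\Pr[\eg_\cR(\vx)\ge (1+\text{something})g(\vx)]$; a union bound over the at most $e^M$ (more precisely, at most $\min(d^{k\delta^{-1}},(k\delta^{-1})^d)$) vectors satisfying the budget constraint, applied with $\theta\ge\lambda^*(\ell)/\OPT$ and the specific form of $\lambda^*$ in Eq.~\eqref{eq:lambda*} with $\alpha=\sqrt{\ell\ln n+\ln 2}$, yields that with probability $\ge 1-1/(2n^\ell)$ the event ``$\eg_\cR(\evx^o)\le g(\evx^o)+\frac{\varepsilon}{2}\cdot\frac{e-1}{e}\cdot\OPT$'' (or the analogous form used in~\cite{tang15}) holds simultaneously for all feasible $\vx$, hence for $\evx^o$. (ii) \emph{Lower-bounding the estimate at the true optimum.} Let $\vx^*$ be the LIM optimum; since $\theta\ge\lambda^*(\ell)/\OPT=\lambda^*(\ell)/g(\vx^*)$, another Chernoff bound (now needing no union bound, a single fixed $\vx^*$) gives $\eg_\cR(\vx^*)\ge(1-\varepsilon')g(\vx^*)$ for an appropriate $\varepsilon'$ with probability $\ge 1-1/(2n^\ell)$. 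Combining: on the intersection of the two events (probability $\ge 1-1/n^\ell$),
\[
g(\evx^o)\ \ge\ \eg_\cR(\evx^o)-\tfrac{\varepsilon}{2}\cdot\tfrac{e-1}{e}\OPT\ \ge\ (1-1/e)\eg_\cR(\vx^*)-\tfrac{\varepsilon}{2}\cdot\tfrac{e-1}{e}\OPT\ \ge\ (1-1/e)(1-\varepsilon')\OPT-\tfrac{\varepsilon}{2}\cdot\tfrac{e-1}{e}\OPT,
\]
which simplifies to $g(\evx^o)\ge(1-1/e-\varepsilon)\OPT$ once the constants in $\lambda^*$ are unwound. Here the only place the workaround of~\cite{Chen18} enters is in how the confidence parameter $\ell$ is inflated (lines~\ref{alg:workaroundb}--\ref{alg:workarounde} of Algorithm~\ref{alg:sampling}) so that the union bound over the guessing loop $i=1,\dots,\log_2 n$ for $\LB$ still leaves total failure probability $\le 1/n^\ell$; I would import that bookkeeping verbatim.

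The last component is verifying that the estimated lower bound $\LB$ produced by the first-phase loop (lines~\ref{line:forestimateb}--\ref{line:forestimatee} of Algorithm~\ref{alg:sampling}) indeed satisfies $\LB\le\OPT$ and $\theta=\lambda^*(\ell)/\LB\le\lambda^*(\ell)/\LB^{\mathrm{needed}}$ with the claimed probability and that the expected running time is as stated --- this reduces to the {\IMM} analysis with $\ln\binom{n}{k}$ replaced by $M$, combined with Lemma~\ref{lem:timeDelta} for the cost of each $\LGreedyDelta$ call and the standard bound (using the technical assumption that $\OPT$ is at least the best single-node influence spread, hence $\OPT=\Omega(1)$ and in fact $\theta\cdot\E[|R|]=O((M+\ell\log n)(m+n)/\varepsilon^2)$) on the total RR-set generation cost. \textbf{The main obstacle} I anticipate is not any single inequality but making the union-bound count airtight: unlike the classical setting where the $\binom{n}{k}$ seed sets are the only candidates, here I must argue that the greedy trajectory of $\LGreedyDelta$ visits only vectors in a set of size $\le\min(d^{k\delta^{-1}},(k\delta^{-1})^d)$ and that the concentration bound with the $\beta=\sqrt{(1-1/e)(M+\alpha^2)}$ term in Eq.~\eqref{eq:lambda*} is calibrated to exactly this count --- i.e.\ checking that every constant chosen in~\cite{tang15} for the $\binom{n}{k}$ case still produces $1-1/e-\varepsilon$ (and not merely $1-1/e-O(\varepsilon)$) after the substitution, and that the partial-coverage structure does not break the martingale property used to handle the data-dependent stopping in the first phase. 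I expect this to be routine but tedious, which is presumably why the excerpt defers it to Appendix~\ref{app:proofThm1}.
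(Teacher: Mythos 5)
Your proposal is correct and follows essentially the same route as the paper's own proof: the paper likewise invokes Lemmas 3 and 4 of the IMM analysis with $\ln\binom{n}{k}$ replaced by $M$ to get the two concentration events (a lower bound on $\eg_\cR$ at the LIM optimum and an upper bound on $\eg_\cR(\vx)-g(\vx)$ over all candidate outputs), then combines them with the greedy guarantee $\eg_\cR(\evx^o)\ge(1-1/e)\eg_\cR(\vx^\circ)$ to conclude $g(\evx^o)\ge(1-1/e-\varepsilon)\OPT$. The extra material you include on the first-phase $\LB$ estimation and running time belongs to the surrounding theorem rather than this lemma, but it does not affect the correctness of your argument.
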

\begin{proof}
Denote $\vx^*$ as the solution of Algorithm ~\ref{alg:hill_climbing_delta} and $\vx^{\circ}$ as the optimal solution of  LIM problem. Through replacing the number of possible $k$-seed set $\binom{n}{k}$ of Lemma 3 and 4 in \cite{tang15} by the number of possible allocations $M$ in our problem, we can derive that with $1-n^\ell$ probability,
$$
\eg_{\cR}(\vx^{\circ}) \geq \left(1 - \varepsilon \cdot \frac{\alpha}{(1 - 1 / e)\cdot\alpha + \beta}\right) \cdot \OPT
$$
and
$$
\eg_{\cR}(\vx^*) \leq g(\vx^*) + \left(\varepsilon - \frac{(1 - 1 / e)\varepsilon \alpha}{(1 - 1 / e)\alpha + \beta}\right)\cdot \OPT
$$
Then by combining the greedy property that $\eg_{\cR}(\vx^*)\geq (1 - 1/e)\eg_{\cR}(\vx^{\circ})$, we have,
$g(\vx^*)\geq (1 - 1/e - \varepsilon)\cdot \OPT$.
\hfill $\square$
\end{proof}

\begin{lemma}
\label{lem:sample}
Let $\ell$ be the input of Algorithm~\ref{alg:sampling}.
With at least $1 - 1 /2n^{(\ell+\gamma)}$ probability, Algorithm \ref{alg:sampling} returns a set $\cR$ of RR sets with $|\cR|\geq \lambda^*(\ell+\gamma) / \OPT$, where $\lambda^*(\ell)$ is as defined in Eq.\eqref{eq:lambda*} and $\gamma$ is obtained in line~\ref{alg:workaroundb}.
\end{lemma}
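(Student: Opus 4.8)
The plan is to follow the martingale-based analysis of the $\Sampling$ procedure from \cite{tang15}, adapted to our lattice setting by replacing the combinatorial bound $\binom{n}{k}$ with $M$, and also accounting for Chen's workaround~2 absorbed in lines~\ref{alg:workaroundb}--\ref{alg:workarounde}. First I would fix attention on the for-loop over $i = 1, \ldots, \log_2 n$ (lines~\ref{line:forestimateb}--\ref{line:forestimatee}) that tries to find a good lower bound $\LB$ on $\OPT$. The key is to show two things: (a) if $y = n/2^i$ happens to satisfy $y \le \OPT$, then with high probability the test $\eg_{\cR}(\vx) \ge (1+\varepsilon')\cdot y$ on line~\ref{line:verifyifb} fails to trigger on a bad (too-large) value, so $\LB$ stays a genuine lower bound; and (b) once $y$ drops to roughly $\OPT/2$ or below, the test triggers with high probability, so the loop terminates with $\LB = \eg_{\cR}(\vx)/(1+\varepsilon') = \Omega(\OPT)$. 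Both directions are concentration statements about $\eg_{\cR}(\vx)$ around $g(\vx)$; since $\eg_{\cR}$ is an average of $\theta_i = \lambda'/y$ i.i.d.\ bounded terms (each in $[0,n]$), Chernoff/martingale bounds as in Lemmas~2--4 of \cite{tang15} apply, with the union bound over strategy mixes contributing a factor $M$ rather than $\binom{n}{k}$ — this is exactly why $\ln\binom{n}{k}$ was replaced by $M$ in $\lambda'$ and $\lambda^*$.

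Next I would combine the loop analysis with the final setting $\theta = \lambda^*(\ell)/\LB$ on line~\ref{line:settheta}. Since we have shown $\LB \le \OPT$ with high probability, we immediately get $\theta = \lambda^*(\ell)/\LB \ge \lambda^*(\ell)/\OPT$; but the statement asks for $|\cR| \ge \lambda^*(\ell+\gamma)/\OPT$, so here I would carefully track how the confidence parameter is bumped: after line~\ref{alg:workarounde}, the ``$\ell$'' used in $\lambda^*(\ell)$ on line~\ref{line:settheta} is actually the incremented value $\ell + \gamma + \ln 2/\ln n$, and one checks $\lambda^*(\ell + \gamma + \ln2/\ln n) \ge \lambda^*(\ell+\gamma)$ since $\lambda^*$ is monotone in its argument. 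This gives $|\cR| \ge \theta \ge \lambda^*(\ell+\gamma)/\OPT$. The probability bookkeeping is: the loop can err in at most $\log_2 n$ iterations, each with failure probability controlled by the $\ln\log_2 n$ term inside $\lambda'$, and a union bound over all these events plus the final-phase event yields an overall failure probability at most $1/(2n^{\ell+\gamma})$, matching the claim.

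The one subtlety worth flagging — and where I expect the argument to require the most care — is that $\LGreedyDelta$ on line~\ref{line:callhillclimb} replaces the original greedy node-selection, so I must verify that the $(1-1/e)$ greedy guarantee still holds on $\eg_{\cR}$; but this is precisely Lemma~\ref{lem:htogsubmodular} (monotone DR-submodularity of $\eg_{\cR}$) together with the correctness of {\LGreedy}/{\LGreedyDelta}, so it plugs in cleanly. The remaining obstacle is purely mechanical: re-deriving the analogues of Lemmas~2--4 of \cite{tang15} with $M$ in place of $\ln\binom{n}{k}$ and confirming that the constants in $\lambda'$ and $\lambda^*(\ell)$ (the $2+\tfrac23\varepsilon'$ factor, the $\varepsilon' = \sqrt2\,\varepsilon$ rescaling, and the $\alpha,\beta$ split) are chosen so that the high-probability bounds on $\eg_{\cR}(\vx^{\circ})$ and $\eg_{\cR}(\vx^*)$ used in Lemma~\ref{lem:hill} go through — essentially a transcription of the original proof with the domain-size parameter swapped out. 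I would present this as: state the two concentration inequalities (loop-phase and final-phase) as claims citing the corresponding lemmas of \cite{tang15}, then assemble them via union bound, then do the $\lambda^*$ monotonicity step to pass from $\ell + \gamma + \ln2/\ln n$ back to $\ell+\gamma$.
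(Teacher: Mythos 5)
Your proposal is correct and follows essentially the same route as the paper's proof: it is the \IMM{} sampling-phase analysis (Lemmas~6--7 of \cite{tang15}) with $\ln\binom{n}{k}$ replaced by $M$, plus the observation that the reset $\ell \leftarrow \ell+\gamma+\ln 2/\ln n$ in line~\ref{alg:workarounde} (and monotonicity of $\lambda^*$) accounts for both the $\lambda^*(\ell+\gamma)$ threshold and the $1-1/(2n^{\ell+\gamma})$ success probability. The paper's proof is exactly this citation-and-substitution argument, so no further comparison is needed.
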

\begin{proof}
Through replacing the number of possible $k$-seed set $\binom{n}{k}$ of Lemma 6 and 7 in \cite{tang15} by the number of possible allocations $M$ in our problem, we can easily get the result of this lemma. It's $1 - 1 /2n^{(\ell+\gamma)}$ rather than $1 - 1 /n^\ell$ because we reset
	$\ell$ as $\ell=\ell+\gamma + \ln 2 / \ln n$ in Algorithm \ref{alg:sampling}, line~\ref{alg:workarounde}.
\hfill $\square$
\end{proof}


\begin{proof}[of Theorem~\ref{thm:IMMGMS} (Sketch)]
By the argument given in~\cite{Chen18}, when combining Lemma \ref{lem:hill} and Lemma \ref{lem:sample}, 
	we should first take a union bound for $|\cR|$ going through $\lambda^*(\ell+\gamma)/\OPT$ to $\lambda^*(\ell+\gamma)$, and for each fixed length $\cR$, we apply Lemma \ref{lem:hill}
	(with $\ell$ set to $\ell+\gamma+\log 2/ \log n$).
This would properly show that with probability at most $1/n^\ell$, the $\cR$ returned by the 
	{\Sampling} procedure will not lead to an output of Algorithm~\ref{alg:hill_climbing_delta}
	as a $(1-1/e-\varepsilon)$-approximate solution to the LIM problem.

%
For time complexity, 
	when $q_{v,j}$'s are such that the optimal solution is at least as good as the best single node influence spread, we can have the inequality $\EPT\leq m\cdot \OPT / n$, where $\EPT$ is the
	expected number of incoming edges pointing to nodes in a random RR set~\cite{tang14}. 
By Lemma~\ref{lem:timeDelta} and an analysis similar to~\cite{tang15}, we can show that
	the total expected running time is bounded by:
\begin{align*}
&O\left(k\delta^{-1}(\max_{v\in V}|S_v|)(\EPT+1) \cdot \frac{\lambda^*(\ell+\gamma+\log 2/\log n)}{\OPT} \right) \\
	& =O\left( k\delta^{-1}(\max_{v\in V}|S_v|)\cdot \lambda^*(\ell) \cdot  (n+m)  \right)\\
	&=O\left ( k\delta^{-1}(\max_{v\in V}|S_v|) (M+\ell\log n)
(n+m) /  \varepsilon^2 \right).
\end{align*}
In the second inequality, besides applying $\EPT\leq m\cdot \OPT / n$, we also ignores 
	$\gamma$ and $\log 2/ \log n$, because asymptotically they are all constants.
\hfill $\square$
\end{proof}

\section{Remaining Part of the Proof of Theorem~\ref{thm:IMMGMSVSN}} \label{app:proofThm3}

We now give the additional details need to prove Theorem~\ref{thm:IMMGMSVSN}.
The main thing we want to clarify is the impact that we use a binary search for 
	the reverse sampling in the LT model part from each real node back to each 
	strategy's virtual node.
To do so, we need to reformulate a previous result $n\cdot \EPT\leq m\cdot \OPT$ in a more general setting.
Let $d'_u$ be the time needed for one-step reverse sampling from node $u$ (previously this would
	be simply the in-degree of $u$).
Given an RR set $R$, let $\omega'(R) = \sum_{u\in R} d'_u$.
Let $\EPT' = \E[\omega'(R)]$, and $\EPT'$ is the expected running time to generate one RR set.
Let $\tilde{v}$ be a random real node sampled from $V$ with probability proportional to $d'_v$'s.
Then we have
\begin{lemma}
\label{lem:generalEPT} 
$n\cdot \EPT' = \sum_{u} d'_u \cdot \E[\sigma(\{\tilde{v}\})]$. 
\end{lemma}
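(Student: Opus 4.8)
The plan is to adapt the classical identity $n \cdot \EPT = m \cdot \E[\sigma(\{\tilde v\})]$ from the RIS literature (\cite{tang14}) to the weighted setting where each node $u$ carries a cost $d'_u$ for one reverse-simulation step, rather than its in-degree. First I would unwind the definitions: $\EPT' = \E_R[\omega'(R)] = \E_R[\sum_{u \in R} d'_u]$, where $R$ is an RR set rooted at a real node chosen uniformly at random from $V$. So $n \cdot \EPT' = \sum_{v \in V} \E_{R_v}[\sum_{u \in R_v} d'_u] = \sum_{v \in V} \sum_{u} d'_u \cdot \Pr[u \in R_v]$, after swapping the (finite) sum and expectation and writing the inner sum as a sum over all $u$ with an indicator.

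The crux is the duality of reachability in the random live-edge graph: $\Pr[u \in R_v] = \Pr[v$ is reachable from $u] = \Pr[v$ is activated when $\{u\}$ is the seed set$]$, because the RR set rooted at $v$ is exactly the set of nodes that can reach $v$ in the live-edge graph sampled from the triggering-set distributions, and the live-edge graph used for forward propagation is drawn from the same distribution. Hence $\sum_{v \in V} \Pr[u \in R_v] = \sum_{v \in V} \Pr[v$ activated from seed $\{u\}] = \E[\sigma(\{u\})]$. Plugging this in, $n \cdot \EPT' = \sum_{u} d'_u \cdot \E[\sigma(\{u\})]$, which I then rewrite using $\tilde v$: since $\tilde v$ is sampled with probability $d'_v / (\sum_w d'_w)$, we have $\E[\sigma(\{\tilde v\})] = \frac{1}{\sum_w d'_w}\sum_u d'_u \E[\sigma(\{u\})]$, so $\sum_u d'_u \E[\sigma(\{u\})] = (\sum_w d'_w) \cdot \E[\sigma(\{\tilde v\})]$. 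Combining gives exactly $n \cdot \EPT' = \sum_u d'_u \cdot \E[\sigma(\{\tilde v\})]$, noting that the claimed statement writes $\sum_u d'_u \cdot \E[\sigma(\{\tilde v\})]$ where $\E[\sigma(\{\tilde v\})]$ is the node-cost-weighted average single-node spread.

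The main obstacle — though it is conceptual rather than computational — is being careful that the live-edge (triggering-set) graph used in the reverse simulation and in the forward influence process are the same random object, so that "$u$ reaches $v$ in the reverse sample rooted at $v$" and "$v$ is activated from seed $u$ in the forward process" are events of equal probability; this is the standard RIS correspondence, already invoked in the excerpt right after the RR-set definition, so it can be cited. A secondary point to handle cleanly is that in $G_A$ the one-step reverse sampling from a real node $v$ involves both sampling $T_v \sim D_v$ (real in-neighbors) and, for each strategy $j \in S_v$, a binary search over $U_j$; bundling all of this into a single per-node cost $d'_v$ is exactly what the lemma's abstraction allows, so no extra argument is needed there beyond defining $d'_v$ appropriately. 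I would keep the proof to the two displayed lines of the swap-of-sums computation plus one sentence invoking reachability duality.
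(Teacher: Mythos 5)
Your proof is correct and is essentially the paper's argument: the paper simply invokes Lemma 4 of \cite{tang14} with each node's incoming edges replaced by $d'_u$ abstract cost units, and your explicit computation (linearity over roots, the RR-set/forward-reachability duality $\Pr[u\in R_v]=\Pr[v \text{ activated from seed } \{u\}]$, then rewriting the $d'_u$-weighted sum as $(\sum_w d'_w)\cdot\E[\sigma(\{\tilde v\})]$) is exactly what that cited proof does with the weights generalized. No gap to report.
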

\begin{proof}[Sketch]
The proof essentially follows the proof of Lemma 4 in~\cite{tang14}, 
	but we need to replace the incoming edges of a node $u$ in that proof to
	$d'_u$ virtual elements of $u$, so that $d'_u$ matches with the in-degree $d_u$ of $u$.
%
%
%
\hfill $\square$
\end{proof}
Note that $\sigma(\{\tilde{v}\})$ defined in the above lemma refers to the classical influence
	spread of $\tilde{v}$ in the original graph.
We are now ready to proof Theorem~\ref{thm:IMMGMSVSN}.	

\begin{proof}[of Theorem~\ref{thm:IMMGMSVSN} (Sketch)]
The approximation ratio is ensured by Theorem~\ref{thm:VSNreduction} and the correctness of
	the {\IMM} algorithm.
For the time complexity, due to our adaption of {\IMM}, the running time is better than the one
	obtained by simply plugging in the number of nodes $n+ k\delta^{-1}d$ and the number of edges
	$m + k\delta^{-1} \sum_{v\in V} |S_v|$ into the running time formula of {\IMM}.
The analysis follows the same structure as that of {\IMM}, and we sketch the main part below.

For the greedy {\NodeSelection} procedure, given a sequence of RR sets $\cR$ of $G_A$ as input,
	its running time is $O(\sum_{R\in \cR} |R\cap U|)$.
The term $|R\cap U|$ is because 
	we only use virtual nodes as seeds and thus only the virtual nodes in an RR sets play a role
	in the {\NodeSelection} algorithm.
In fact, we could define an RR set in this case to only contain virtual nodes, but for the convenience
	of analyzing the running time, we still keep real nodes in the RR sets.
From the analysis in~\cite{tang15,Chen18}, we know that the total expected running time from all calls
	to {\NodeSelection} is 
	$O(\E[\theta]\cdot \E[|R\cap U|])$, where
	$\theta$ is the total number of RR sets generated by the algorithm.
Similarly, the time spent on generating all RR sets is $O(\E[\theta] \cdot \E[\omega'(R)] )$.
Since $\omega'(R)$ is the running time of generating $R$, we have
	$|R\cap U| \le \omega'(R)$.
Therefore, the total expected running time of the algorithm is 
	$O(\E[\theta] \cdot \E[\omega'(R)] ) =O(\E[\theta] \cdot \EPT')$.

By Lemma~\ref{lem:generalEPT}, and the assumption that the optimal solution of the LIM is at least
	as large as the optimal single node influence spread, we have
	$\EPT' \le m' \cdot \OPT / n$.
From \cite{tang15} we know that $\E[\theta] = O(\lambda^* / \OPT)$.
By Eq.~\eqref{eq:lambda*}
	$\lambda^* = O(M + \ell \log n)$.
Finally $m' = \sum_{v\in V} d'_v = O(m + \log(k\delta^{-1}) \sum_{v\in V} |S_v|)$, because for the original graph
	the reverse sampling via the triggering set uses time proportional to the in-degree of $v$
	in the original graph, and for the virtual nodes, the reverse sampling from each real node
	to each strategy's virtual nodes takes $O(\log(k\delta^{-1}))$ time via a binary search.
Combining all the above together, we know that the expected running time is
	$O(\E[\theta] \cdot \EPT') = (M + \ell \log n)(m + \log(k\delta^{-1})\sum_{v\in V} |S_v|) /\varepsilon^2)$.
\hfill $\square$
\end{proof}

\end{document}